\newtheorem{thm}    {Theorem}
\newtheorem{lem}     {Lemma}
\newtheorem{proposition}        {Proposition}
\newtheorem{rem}     {Remark}
\newtheorem{example}  {Example}
\newtheorem{condition}  {Condition}
\def\argmax{\mathop{\rm argmax}}
\def\mix{\mathop{\rm mix}}
\newcommand{\defeq}{\stackrel{\rm def}{=}}
\newcommand{\bF}{\mathbb{F}}
\def\cA{{\cal A}}
\def\cE{{\cal E}}
\def\cY{{\cal Y}}
\def\cZ{{\cal Z}}
\def\cM{{\cal M}}
\def\cD{{\cal D}}
\def\rE{{\rm E}}
\def\rP{{\rm P}}
\newcommand{\cX}{{\cal X}}
\newcommand{\bX}{{\bf X}}
\newcommand{\bY}{{\bf Y}}
\newcommand{\lleq}{\mathrel{\mathpalette\gl@align<}}
\newcommand{\ggeq}{\mathrel{\mathpalette\gl@align>}}
\newcommand{\gl@align}[2]{
\vbox{\baselineskip\z@skip\lineskip\z@
\ialign{$\m@th#1\hfil##\hfil$\crcr#2\crcr{}_{{}_{(=)}}\crcr}}}
\def\Label#1{\label{#1}\ \text{[\ #1\ ]}\ }
\def\Label{\label}
\begin{document}
\title{Tight exponential analysis of universally composable privacy amplification and its applications}
\author{
Masahito~Hayashi~\IEEEmembership{Senior Member,~IEEE}
\thanks{
M. Hayashi was with Graduate School of Information Sciences, Tohoku University, Aoba-ku, Sendai, 980-8579, Japan
and
Centre for Quantum Technologies, National University of Singapore, 3 Science Drive 2, Singapore 117542.
He is now with 
Graduate School of Mathematics, Nagoya University,
Furocho, Chikusaku, Nagoya, 464-860, Japan
and
Centre for Quantum Technologies, National University of Singapore, 3 Science Drive 2, Singapore 117542.
(e-mail: masahito@math.nagoya-u.ac.jp)}}
\date{}
\maketitle

\begin{abstract}
Motivated by the desirability of universal composability, we analyze in terms of $L_1$ distinguishability 
the task of secret key generation from a joint random variable. 
Under this secrecy criterion, 
using the R\'{e}nyi entropy of order $1+s$ for $s \in [0,1]$,
we derive a new upper bound of
Eve's distinguishability under the application of the universal$_2$ hash functions.
It is also shown that this bound gives the tight exponential rate of decrease in 
the case of independent and identical distributions.
The result is applied to the wire-tap channel model 
and to secret key generation (distillation) by public discussion.
\end{abstract}
\begin{keywords}
sacrifice bits, $L_1$ norm distance, universal composablity, 
secret key distillation, universal$_2$ hash functions, wire-tap channel
\end{keywords}

\section{Introduction}
Random privacy amplification based on the universal$_2$ condition \cite{Carter}
has been studied by many authors \cite{BBCM,MW,RW,Ren05,ILL,Hayashi2}.
This technique is originally developed for random number extraction \cite{BBCM,MW}. 
It can also be applied to 
secret key generation (distillation) with public communication \cite{AC93,Mau93,MUW05,MM,WMU,MW,RW} 
and the wire-tap channel \cite{Wyner,CK79,Csiszar,Deve,WNI,Hayashi}, which treats 
the secure communication in the presence of an eavesdropper.
(For details of its application, see e.g. the previous paper \cite{Hayashi2}.)
When random privacy amplification 
is implemented with universal$_2$ hash functions,
it can yield protocols for the above tasks
with a relatively small amount of calculation.

Similar to the study \cite{BBCM,ILL} for random privacy amplification based on the universal$_2$ condition,
the previous paper \cite{Hayashi2} focused only on the mutual information with the eavesdropper.
However, 
as the secrecy criterion,
many papers in the cryptography community
 \cite{Can,MW,RW,Ren05} adopt the half of the $L_1$ norm distance, so called $L_1$ distinguishability
because this criterion is closely related to universally composable security \cite{Can}.
In this paper, we adopt $L_1$ distinguishability as the secrecy criterion, and evaluate the secrecy for random privacy amplification.
In the independent and identically distributed case,
when the rate of generated random numbers is smaller than the entropy of the original information source,
it is possible to generate a random variable whose $L_1$ norm distance to the uniform random number approaches zero asymptotically.
In the realistic setting, we can manipulate only a finite size of random variables.
In order to treat the performance in the finite length setting,
we have two kinds of formalism for the independent and identical distribution setting.

The first one is the second order formalism, in which,
we focus on the asymptotic expansion up to second order in $\sqrt{n}$
of the length of the generated keys $l_{n}$ 
as $l_{n}= H n + C\sqrt{n}+o(\sqrt{n})$
with a constant constraint for the security parameter.
The second one is the exponent formalism,
in which, we fix the generation rate $R:= l_{n}/n$
and evaluate the exponential decreasing rate of convergence of the security parameter.
In the exponent formalism,
it is not sufficient to show that the security parameter goes zero exponentially,
and it is required to explicitly give lower and/or upper bounds for the exponential decreasing rate.
The exponent formalism has been studied by various information theoretical settings,
e.g., channel coding \cite{Gal,SGB}, source coding \cite{CKbook,Han-source}, and mutual information criterion in wire-tap channel \cite{Hayashi,Hayashi2}.
As for the second order formalism, 
the optimal coding length with the fixed error probability
has been derived up to the second order $\sqrt{n}$ in various settings \cite{strassen,Hay1,Pol}
in the case of channel coding.
In particular, the previous paper \cite{Hay1} treats it based on the information spectrum approach \cite{Han}, 
which is closely related to $\epsilon$-smooth min-entropy.
Note that, as is mentioned by Han \cite{Han},
the information spectrum approach cannot yield the optimal exponent of error probability in the channel coding.

Concerning the second order formalism for uniform random number generation,
the previous paper \cite{H08} has solved the optimal second order coefficient
under $L_1$ distinguishability criterion and other criteria
by employing the information spectrum method when there is no side information. 
Even when the side information exists, 
the same argument can be shown for the second order formalism
by replacing 
the variance of the likelihood by 
the variance of the likelihood for the conditional distribution
due to the following reason.
For the converse part,
the key lemma (\cite[Lemma 4]{H08},\cite[Lemma 2.1.2]{Han}) holds
by replacing the distribution by the conditional distribution.
The direct part
can be shown by replacing 
the key lemma (\cite[Lemma 3]{H08},\cite[Lemma 2.1.1]{Han}) 
by the inequality (\ref{speLem}) in the present paper, which holds under the universal$_2$ hash functions.

However, the exponent formalism for $L_1$ distinguishability with secure key generation
has not been studied sufficiently.
Only the previous paper \cite{Hayashi2} treated it with mutual information criterion.
Therefore, the present paper focuses on the exponent formalism for $L_1$ distinguishability.


In Section \ref{s4},
first,
we focus on evaluation for random privacy amplification by Bennett et al \cite{BBCM},
which employs the R\'{e}nyi entropy of order 2.
This evaluation was also obtained by H\r{a}stad et al \cite{ILL} and is often called leftover hash lemma.
Using a discussion similar to Renner \cite{Ren05},
we derive an upper bound for the $L_1$ norm distance
under the universal$_2$ condition for hash functions, 
which is the main theorem of this paper (Theorem \ref{thm1}).

Next, we apply this theorem to the i.i.d. setting with a given key generation rate and a given source distribution.
Then, 
we derive a lower bound of the exponent of the average of the $L_1$ norm distance between the generated random number and the uniform random number
when a family of universal$_2$ hash functions is applied.
Next, we introduce a stronger condition for hash functions, which is called strongly universal$_2$.
We consider the $n$-independent and identical extension,
and show that the exponential rate of decrease for this bound is tight under a stronger condition
by using the type method,
which was invented by Csisz\'{a}r and K\"{o}rner \cite{CKbook} and is one of standard methods in information theory.
Since our bound realizes the optimal exponent, 
it gives a powerful bound even for the finite length setting \cite{Watanabe}.
One might consider that the smooth min entropy can derive the same lower bound for the exponential decreasing rate of universal composability.
However, as shown in Subsection \ref{s2d}, 
the bound derived by the smooth min entropy is strictly smaller than that by smoothing of R\'{e}nyi entropy of order $2$.
This disagreement is not so unnatural because 
a similar disagreement appears for the exponent of error probability in the channel coding
as a relation between Gallager exponent and the lower bound derived by the information spectrum approach \cite{Han}.

Further, if our protocol generating the random number is allowed to depend on the original distribution,
there is a possibility to improve the exponent while it is known that asymptotic generation cannot be improved \cite{VV}.
In Section \ref{s3}, we derive the optimal exponent in this setting by using Cram\'{e}r's Theorem \cite{DZ} and the type method \cite{CKbook}.
Based on comparison between this exponent and the exponent given in Section \ref{s4},
we can compare the performances between 
the protocol taking into account the full probability distribution of the source
and the protocol based on the entropy of the source, which is realized by 
universal$_2$ hash functions.

In Section \ref{s42}, we consider the case when an eavesdropper has a random variable correlated to the random variable of the authorized user.
In this case, applying universal$_2$ hash functions to his random variable,
the authorized user obtain a secure random variable.
We apply our evaluation of $L_1$ norm distance obtained in Subsection \ref{s2a} (Theorem \ref{thm1})
to the distribution of the authorized user 
when the eavesdropper's random variable is fixed to a certain value.
Then,
we obtain a tighter evaluation (\ref{4-25-1})
than that directly obtained from the previous paper \cite{Hayashi2}.

In Section \ref{s2}, 
we focus on wire-tap channel model, whose capacity has been calculated by
Wyner \cite{Wyner} and Csisz\'{a}r and K\"{o}rner \cite{CK79}.
Csisz\'{a}r \cite{Csiszar} showed the strong security,
and many papers \cite{Hayashi2,BL,CTD} treat this model with mutual information criterion.
The previous paper \cite{Hayashi} derived bounds for both exponential rates of decrease 
for the security criterion based on the $L_1$ norm distance
as well as the mutual information between Alice and Eve.
It obtained a bound for the exponential rate of decrease concerning the $L_1$ security criterion.
In this paper, we apply (\ref{4-25-1}) to wire-tap channel model,
and obtain the evaluation of the exponent of the $L_1$ security criterion.
In Section \ref{s6-1}, 
it is shown 
that the evaluation obtained in this paper is better than that by the previous paper \cite{Hayashi}.
In a realistic setting, it is natural to restrict our codes to linear codes.
In Section \ref{s3-a}, 
using (\ref{8-5-1}), we provide a security analysis for a code constructed by the combination of 
an arbitrary linear code and privacy amplification by universal$_2$ hash functions.
This analysis yields the exponential rate of decrease for the $L_1$ security criterion.
Overall, since (\ref{4-25-1}) and (\ref{8-5-1}) are derived from Theorem \ref{thm1},
all of the obtained results concerning the wire-tap channel model can be regarded as consequences of Theorem \ref{thm1}.

Further, 
in Section \ref{s6}, 
we obtain the bound for the $L_1$ security criterion
in one-way secret key generation.
In Appendix \ref{as2}, we prove Theorem \ref{thm2} mentioned in Subsection \ref{s2a}.
In Appendix \ref{al9-9-1}, we prove Lemma \ref{l9-9-1} given in Subsection \ref{s3}.
In Appendix \ref{8-03-e}, we show Equation (\ref{8-03-a}), which is important for comparison in Subsection \ref{s2d}.

\subsection*{Relation with the previous paper \cite{Hayashi2}}
The main difference from the previous paper \cite{Hayashi2} is that 
the analysis on this paper is based on $L_1$ distinguishability while that on the previous paper \cite{Hayashi2}
is based on the mutual information criterion.
In the first step, this paper derives an evaluation (Theorem \ref{thm1}) of the equality of the uniform random number generation 
by universal$_2$ hash functions based on the $L_1$ norm criterion.
Applying Theorem \ref{thm1}, we treat several security problems.
Since this paper treats the same security problems as the previous paper with the different criterion,
some of protocols used in this paper were used in the previous paper \cite{Hayashi2}.
That is, the coding protocols used in Sections \ref{s2}, \ref{s3-a} and \ref{s6}
are used in Sections III, V, and VI in \cite{Hayashi2}, respectively.
While these protocols are described in \cite{Hayashi2}, 
we describe the whole protocols in this paper for the readers' convenience.

For uniform random number generation, 
this paper gives the tight exponential rate of decrease for the $L_1$ norm distance,
while the previous paper \cite{Hayashi2} gives a lower bound on the exponential rate of decrease based on Shannon entropy.
Concerning secret key generation without communication,
this paper gives a lower bound of the exponential rate of decrease based on $L_1$ distinguishability,
while the previous paper \cite{Hayashi2} gives a lower bound of the exponential rate of decrease based on the mutual information criterion.
Applying Pinsker's inequality (\ref{1-5-3-1}), we can derive 
a lower bound of the exponential rate of decrease based on $L_1$ distinguishability 
from the lower bound in \cite{Hayashi2}.
As is shown in Lemma \ref{L1-6-2} in Subsection \ref{s42s},
our lower bound is (strictly) better than combination of Pinsker's inequality and the lower bound by \cite{Hayashi2}
(except for special cases).
Note that 
application of 
Pinsker's inequality (\ref{1-5-3-1}) or (\ref{1-5-3})
yields the half of the lower bound of the exponent of the mutual information as a lower bound of the exponent of 
universal composability.
Indeed, we give a numerical example in Fig. \ref{f3}, in which
our bound is strictly better than that by \cite{Hayashi2}.

Concerning the wire-tap channel in a general framework,
the code given in this paper is quite similar to that in the previous paper \cite{Hayashi2}.
However, the evaluation method in this paper is 
different from that of the previous paper \cite{Hayashi2}
because the analysis in this paper is based on $L_1$ distinguishability 
while that in the previous paper \cite{Hayashi2} is based on the mutual information.
In this model, 
we can derive a lower bound for the exponential rate of decrease based on $L_1$ distinguishability
by the combination of Pinsker's inequality (\ref{1-5-3-1})
and the result in \cite{Hayashi2}.
As is shown in Section \ref{s6-1},
our lower bound is 
better than this lower bound from \cite{Hayashi2}.
Section \ref{s3-a} treats a more realistic setting
by using linear codes.
Even in this setting, 
as is explained in Remark \ref{rem1},
our lower bound is strictly better than the lower bound by \cite{Hayashi2} (except for special cases mentioned in Lemma \ref{L1-6-2}).
The same observation can be applied to 
secret key generation by public communication,
which is discussed in Section \ref{s6}.

\section{Preliminaries}
First, we briefly explain some notation and basic knowledge in 
information theory.
In order to evaluate the difference between two distributions
$P^X$ and $\tilde{P}^X$,
we employ the following quantities:
the $L_1$ distance (variational distance)
\begin{align}
 d_1(P^X,\tilde{P}^X) 
:=
\sum_{x}|P^X(x)-\tilde{P}^X(x)|,
\end{align}
the $L_2$ distance
\begin{align}
d_2(P^X,\tilde{P}^X) 
:=\sqrt{\sum_{x}(P^X(x)-\tilde{P}^X(x))^2},
\end{align}
and the KL-divergence
\begin{align}
D(P^X \| \tilde{P}^X) 
:=\sum_{x} P^X(x)(\log P^X(x)-\log \tilde{P}^X(x)),
\end{align}
where $\log$ expresses the natural logarithm.
These definitions can be extended 
when the total measure is less than $1$ i.e.,
$\sum_{a} P^{A}(a) \le 1$.
In the following, we call such $P^A$ a sub-distribution.
This extension for sub-distributions
is crucial for the later discussion.

When a joint distribution $P^{X,Y}$ is given,
we have the following equation
\begin{align}
& d_1(P^{X,Y},\tilde{P}^X \times P^Y) 
=
\sum_{x,y} |P^{X,Y}(x,y)- \tilde{P}^X (x) P^Y (y)| \nonumber \\
=&
\sum_{y} P^Y(y) \sum_x |P^{X|Y}(x|y)- \tilde{P}^X (x) | \nonumber \\
= &
\sum_{y} P^Y(y) d_1 (P^{X|Y=y},\tilde{P}^X).\Label{1-4-1}
\end{align}

When $P^X,\tilde{P}^X$ are normalized distributions,
as a relation between
the KL-divergence
and the $L_1$ distance,
the Pinsker's inequality 
\begin{align}
\frac{1}{2} d_1(P^X,\tilde{P}^X) ^2 \le
D(P^X \| \tilde{P}^X) 
\Label{1-5-3-1}
\end{align}
is known \cite{CKbook}.
That is,
\begin{align}
-\log d_1(P^X,\tilde{P}^X) 
\ge
\frac{-1}{2}(\log D(P^X \| \tilde{P}^X) 
+\log 2).
\Label{1-5-3}
\end{align}
These relations will be helpful for later discussions.

\section{Uniform random number generation}\Label{s4}
\subsection{Protocol based on universal$_2$ hash function: Direct part}
\Label{s2a}
Firstly,
we consider the uniform random number generation problem from
a biased random number $a \in \cA$, which obeys a probability 
distribution $P^A$
for finite cardinality $|\cA|$.
There are two types of protocols for this problem.
One is a protocol specialized for the given distribution $P^A$.
The other is a universal protocol that does not depend on 
the given distribution $P^A$.
The aim of this section is evaluate the performance of the latter setting.
In the latter setting, our protocol is given by a function $f$ from $\cA$ to $\cM=\{1,\ldots, M\}$.

The quality of the random number obeying the sub-distribution $P^A$ is 
evaluated by 
\begin{align}
d_1(P^{A}):=
d_1(P^{A}, P^{A}({\cal A}) P^{A}_{\mix}) ,
\end{align}
where $P^{A}_{\mix}$ is the uniform distribution on $\cA$.
We also use 
the R\'{e}nyi entropy of order $1+s$:
\begin{align*}
H_{1+s}(A|P^{A}) 
:=\frac{-1}{s}\log \sum_{a} P^{A}(a)^{1+s}.
\end{align*}
The $L_2$ distance is written by using the R\'{e}nyi entropy of order $2$ as follows.
\begin{align}
d_2(P^{A}, P^{A}({\cal A}) P^{A}_{\mix}) ^2
=e^{-H_{2}(A|P^{A})}- \frac{P^{A}(\cA)^2}{|\cA|}.
\Label{5-14-3}
\end{align}

Now, we focus on an ensemble of functions $f_{\bX}$ from 
$\cA$ to $\cM=\{1, \ldots, M\}$, 
where $\bX$ denotes a random variable describing 
the stochastic behavior of the function $f_{\bX}$.
In this case, we adopt on the following quantity
as a criterion of the secrecy:
\begin{align}
&\rE_{\bX} d_1(P^{f_{\bf X}(A)}) 
=
\rE_{\bX}
d_1(P^{f_{\bf X}(A)}, P^{A}({\cal A}) P^{f_{\bf X}(A)}_{\mix}) \nonumber \\
=& 
d_1(P^{B,\bX}, P^{A}({\cal A}) P^{B}_{\mix}\times P^{\bX}) ,
\Label{1-5-4}
\end{align}
where $B$ is the random variable $f_{\bf X}(A)$
and the final equation follows from (\ref{1-4-1}).
Hence, when the expectation $\rE_{\bX} d_1(P^{f_{\bf X}(A)})$
is sufficiently small,
the random variable $f_{\bf X}(A)$ is almost independent of 
the side information $\bX$.
Then, 
the choice $f_{\bf X}$ can be communicated between Alice and Bob 
without revealing anything about $f(A)$.

An ensemble of hash functions $f_{\bX}$ is called universal$_2$ when it satisfies the following condition \cite{Carter}:
\begin{condition}[Universal$_2$]
\Label{C1}
For any elements $a_1 \neq  a_2\in \cA$,
the collision probability that $f_{\bX}(a_1)=f_{\bX}(a_2)$ is
at most $\frac{1}{M}$.
\end{condition}
We sometimes require 
the following additional condition:
\begin{condition}\Label{C12}
For any $\bX$, the cardinality of $f_{\bX}^{-1}\{i\}$ does not depend on $i$.
\end{condition}
This condition will be used in Section \ref{s3}.

Indeed, when the cardinality $|\cA|$ is a power of a prime power $q$
and $M$ is another power of the same prime power $q$,
as is shown in Appendix II of the previous paper \cite{Hayashi2},
the ensemble $\{f_{\bX}\}$ can be chosen to be
the concatenation of a Toeplitz matrix and the identity 
$(\bX,I)$ \cite{Krawczyk}
only with $\log_q |\cA|-1$ random variables taking values in the finite field $\bF_q$.
That is, the function can be obtained by the multiplication of the random matrix $(\bX,I)$ taking values in $\bF_q$.
In this case, 
Condition \ref{C12} can be confirmed because the rank of $(\bX,I)$ is constant.


Bennett et al \cite{BBCM} essentially showed the following lemma.
\begin{lem}\Label{LL1}
A family of universal$_2$ hash functions $f_{\bX}$ satisfies
\begin{align}
\rE_{\bX}
e^{-H_{2}(f_{\bf X}(A)|P^{f_{\bf X}(A)})}
\le 
e^{-H_{2}(A|P^{A})}+
\frac{P^{A}(\cA)^2}{M}.\Label{5-14-4}
\end{align}
\end{lem}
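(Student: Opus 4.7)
The plan is to compute $\rE_{\bX}\sum_i P^{f_{\bX}(A)}(i)^2$ directly by expanding the square and splitting the double sum according to whether the two indices coincide, then apply the universal$_2$ collision bound to the off-diagonal terms.

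First I would note that by the definition of $H_2$ and $\tilde H_2$ (with $s=1$ in the displayed formulas), $e^{-H_{2}(f_{\bX}(A)|P^{f_{\bX}(A)})} = \sum_{i\in\cM} P^{f_{\bX}(A)}(i)^2$, where $P^{f_{\bX}(A)}(i)=\sum_{a:f_{\bX}(a)=i} P^{A}(a)$. Expanding the square gives
\begin{align*}
\sum_i P^{f_{\bX}(A)}(i)^2
= \sum_{a_1,a_2\in\cA} P^{A}(a_1)P^{A}(a_2)\,\mathbf{1}\{f_{\bX}(a_1)=f_{\bX}(a_2)\}.
\end{align*}
Taking the expectation over $\bX$ and exchanging the finite sum with the expectation reduces the quantity to $\sum_{a_1,a_2} P^{A}(a_1)P^{A}(a_2)\,\Pr_{\bX}[f_{\bX}(a_1)=f_{\bX}(a_2)]$.

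Next I would split this sum into the diagonal part ($a_1=a_2$) and the off-diagonal part ($a_1\neq a_2$). On the diagonal, the collision indicator is always $1$, so the contribution is $\sum_{a}P^{A}(a)^2 = e^{-H_{2}(A|P^{A})}$. On the off-diagonal, Condition \ref{C1} yields $\Pr_{\bX}[f_{\bX}(a_1)=f_{\bX}(a_2)]\le 1/M$, so this part is bounded by $\tfrac{1}{M}\sum_{a_1\neq a_2} P^{A}(a_1)P^{A}(a_2)\le \tfrac{1}{M}\bigl(\sum_{a} P^{A}(a)\bigr)^2 = P^{A}(\cA)^2/M$. Adding the two contributions gives \eqref{5-14-4}.

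There is essentially no obstacle here: the only subtle point is remembering that $P^{A}$ is allowed to be a sub-distribution, so one should write $P^{A}(\cA)^2$ rather than $1$ when bounding $\sum_{a_1\neq a_2} P^{A}(a_1)P^{A}(a_2)$ by $\bigl(\sum_a P^{A}(a)\bigr)^2$, which is exactly the form stated in the lemma and which matches \eqref{5-14-3}. The argument does not use Condition \ref{C12}, consistent with the remark that the second condition is only needed later in Section III.
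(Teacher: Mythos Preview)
Your argument is correct and is the standard proof of this inequality. The paper does not actually supply its own proof of this lemma; it simply attributes the result to Bennett et al.\ \cite{BBCM} and H\aa stad et al.\ \cite{ILL} (the leftover hash lemma) and uses it as a black box, so there is nothing further to compare against.
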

This was also shown by H\r{a}stad et al \cite{ILL} and is often called leftover hash lemma.

Now, we follow the derivation of Theorem 5.5.1 of Renner \cite{Ren05}
when one classical random variable is given.
The Schwarz inequality implies that
\begin{align*}
& d_1(P^{f_{\bf X}(A)}, P^{A}({\cal A}) P^{f_{\bf X}(A)}_{\mix}) \\
\le &
\sqrt{M}
\sqrt{
d_2(P^{f_{\bf X}(A)}, P^{A}({\cal A}) P^{f_{\bf X}(A)}_{\mix}) 
}.
\end{align*}
Jensen's inequality yields that
\begin{align*}
& \rE_{\bX}
d_1(P^{f_{\bf X}(A)}, P^{A}({\cal A}) P^{f_{\bf X}(A)}_{\mix}) 
\\
\le &
\sqrt{M}
\sqrt{
\rE_{\bX}
d_2(P^{f_{\bf X}(A)}, P^{A}({\cal A}) P^{f_{\bf X}(A)}_{\mix}) 
}.
\end{align*}
Substituting 
(\ref{5-14-3}) and (\ref{5-14-4}) into the above inequality,
we obtain
\begin{align}
\rE_{\bX}
d_1(P^{f_{\bf X}(A)})
\le
M^{\frac{1}{2}} e^{-\frac{H_{2}(A|P^{A})}{2}}. \Label{5-14-6}
\end{align}

Using (\ref{5-14-6}), we can show the following theorem
as a generalization of (\ref{5-14-6}).
\begin{thm}\Label{thm1}
A family of universal$_2$ hash functions $f_{\bX}$ satisfies
\begin{align}
\rE_{\bX}
d_1(P^{f_{\bf X}(A)})
\le
3 M^{\frac{s}{1+s}} e^{-\frac{s H_{1+s}(A|P^{A})}{1+s}}
\hbox{ for } 0 \le s \le 1.
\Label{5-6-3}
\end{align}
\end{thm}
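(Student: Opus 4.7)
The plan is to follow the Renner-style smoothing approach alluded to in the introduction, but with a large-deviation truncation tailored to the Rényi entropy of order $1+s$ rather than the min-entropy. The starting point is the already-established leftover hash bound (\ref{5-14-6}), which is tight for the Rényi entropy of order $2$; the task is to interpolate it to order $1+s$ with $s\in[0,1]$ by truncating the tail of $P^A$.

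Concretely, I would fix a threshold $R>0$ (to be optimized later) and split $\cA$ into the ``good'' set $\cA_R:=\{a:P^A(a)\le e^{-R}\}$ and its complement. Define the sub-distribution $\tilde P^A$ by $\tilde P^A(a):=P^A(a)$ for $a\in\cA_R$ and $0$ otherwise. First I would bound the truncation mass using a Markov-type argument with the Rényi entropy of order $1+s$: writing $P^A(a)>e^{-R}$ on the bad set and inserting the trivial factor $(P^A(a)e^R)^s\ge1$ there,
\begin{align*}
\|P^A-\tilde P^A\|_1
\le e^{sR}\sum_a P^A(a)^{1+s}
= e^{sR}e^{-\tilde H_{1+s}(A|P^A)}.
\end{align*}
Second, the truncated sub-distribution has its order-$2$ collision probability controlled by the order-$1+s$ quantity:
\begin{align*}
e^{-H_2(A|\tilde P^A)}
=\sum_{a\in\cA_R}P^A(a)^{1+s}P^A(a)^{1-s}
\le e^{-R(1-s)}e^{-\tilde H_{1+s}(A|P^A)}.
\end{align*}
Feeding this into (\ref{5-14-6}) applied to $\tilde P^A$ (the Bennett-et-al.\ lemma and its consequence are stated for sub-distributions, which is exactly why the paper points out that the $L_1$/$L_2$ definitions extend to total mass $\le1$) yields
\begin{align*}
\rE_{\bX}d_1(\tilde P^{f_{\bX}(A)})
\le M^{1/2}e^{-R(1-s)/2}e^{-\tilde H_{1+s}(A|P^A)/2}.
\end{align*}

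Third, I would stitch the two estimates together with a triangle inequality. Because $f_{\bX}$ is deterministic and $\tilde P^A\le P^A$ pointwise, the $L_1$ truncation error is preserved under pushforward, and a direct triangle calculation gives
\begin{align*}
d_1(P^{f_{\bX}(A)})\le 2\|P^A-\tilde P^A\|_1+d_1(\tilde P^{f_{\bX}(A)}),
\end{align*}
(the factor $2$ absorbs both the shift of the sub-distribution and the renormalization of the uniform mass). Taking expectations over $\bX$ and combining the two estimates produces a bound of the form $2e^{sR}e^{-\tilde H_{1+s}}+M^{1/2}e^{-R(1-s)/2}e^{-\tilde H_{1+s}/2}$.

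Finally I would optimize over $R$. Setting the two terms equal gives $R^*=(\log M+\tilde H_{1+s}(A|P^A))/(1+s)$, at which both summands collapse to $M^{s/(1+s)}e^{-\tilde H_{1+s}(A|P^A)/(1+s)}$, giving the overall constant $2+1=3$ in (\ref{5-6-3}). The genuinely delicate step is the second one: one must verify that the pointwise bound $P^A(a)\le e^{-R}$ survives the application of the leftover hash lemma to the sub-distribution and, in particular, that the factor $P^A(\cA)^2/M$ appearing in (\ref{5-14-4}) remains compatible with the optimization (it does, because it only shrinks under truncation). Everything else is routine convexity.
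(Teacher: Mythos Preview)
Your proposal is correct and follows essentially the same argument as the paper's proof: the paper also truncates $P^A$ at level $e^{-R'}$, uses the identical Markov bound (\ref{5-14-9}) for the tail mass and the identical bound (\ref{5-14-8}) for the truncated collision probability, applies (\ref{5-14-6}) to the sub-distribution, combines them via the same triangle inequality with factor $2$, and plugs in $R'=(\log M+\tilde H_{1+s})/(1+s)$ to obtain the constant $2+1=3$. The only cosmetic difference is that the paper names the bad set $\Omega_{R'}$ rather than the good set $\cA_R$.
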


Substituting $s=1$, we obtain
\begin{align}
\rE_{\bX}
d_1(P^{f_{\bf X}(A)})
\le
3 M^{\frac{1}{2}} e^{-\frac{H_{2}(A|P^{A})}{2}}.\Label{5-14-7}
\end{align}
Since the difference between (\ref{5-14-6}) and (\ref{5-14-7}) is only the coefficient,
Theorem \ref{thm1} can be regarded as a kind of generalization of 
Bennett et al \cite{BBCM}'s result (\ref{5-14-4}).

\begin{proof}
For any $R'>0$, we choose the subset $\Omega_{R'}:=\{P^A(a) > e^{-R'}  \}$,
and define the sub-distribution $P^A_{R'}$ by
\begin{align*}
P^A_{R'}(a):=
\left\{
\begin{array}{ll}
0 & \hbox{ if } a \in \Omega_{R'} \\
P^A(a) & \hbox{ otherwise.}
\end{array}
\right.
\end{align*}
Since 
\begin{align*}
d_1(P^A,P^A_{R'})=P^A(\Omega_{R'})
\end{align*}
and
\begin{align*}
&d_1(P^{A}_{R'}({\cal A}) P^{f_{\bX}(A)}_{\mix}, P^{A}({\cal A}) P^{f_{\bf X}(A)}_{\mix}) \\
=&
d_1( 0 , 
(P^{A}({\cal A}) - P^{A}_{R'}({\cal A}))
P^{f_{\bf X}(A)}_{\mix}) \\
=&
(P^{A}({\cal A}) - P^{A}_{R'}({\cal A}))
d_1( 0 , P^{f_{\bf X}(A)}_{\mix}) \\
=& P^{A}({\cal A}) - P^{A}_{R'}({\cal A}) 
= P^A(\Omega_{R'}),
\end{align*}
the idea of ``smoothing'' by Renner \cite{Ren05} yields that
\begin{align}
& d_1(P^{f_{\bX}(A)})
= 
d_1(P^{f_{\bX}(A)}, P^{A}({\cal A}) P^{f_{\bf X}(A)}_{\mix}) 
\nonumber 
\\
\le &
d_1(P^{f_{\bX}(A)}, P^{f_{\bX}(A)}_{R'})
+d_1(P^{f_{\bX}(A)}_{R'}, P^{A}_{R'}({\cal A}) P^{f_{\bX}(A)}_{\mix} ) \nonumber \\
&+d_1(P^{A}_{R'}({\cal A}) P^{f_{\bX}(A)}_{\mix}, P^{A}({\cal A}) P^{f_{\bf X}(A)}_{\mix}) \nonumber \\
= &
2 P^A(\Omega_{R'})
+ d_1(P^{f_{\bX}(A)}_{R'}). 
\end{align}
Taking the expectation over $\bX$, we obtain 
\begin{align}
\rE_{\bX} d_1(P^{f_{\bX}(A)})
\le
2 P^A(\Omega_{R'})
+ \rE_{\bX} d_1(P^{f_{\bX}(A)}_{R'}). \Label{5-14-10}
\end{align}
The inequality (\ref{5-14-6}) yields 
\begin{align*}
\rE_{\bX} d_1(P^{f_{\bX}(A)}_{R'})
\le M^{\frac{1}{2}} e^{-\frac{1}{2}H_2(A|P^A_{R'})}.
\end{align*}
For $0 \le s \le 1$, we can evaluate $e^{-H_2(A|P^A_{R'})}$  and $P^A(\Omega_{R'})$ as
\begin{align}
& e^{-H_2(A|P^A_{R'})} = 
\sum_{a\in \Omega_{R'}^c} P^A(a)^2
\le
\sum_{a\in \Omega_{R'}^c} P^A(a)^{1+s} e^{-(1-s)R'} \nonumber \\
\le &
\sum_{a} P^A(a)^{1+s} e^{-(1-s)R'} 
=e^{-s H_{1+s}(A|P^A) -(1-s)R'} \Label{5-14-8}\\
& P^A(\Omega_{R'})  =
\sum_{a\in \Omega_{R'}} P^A(a)
\le 
\sum_{a\in \Omega_{R'}} (P^A(a))^{1+s} e^{sR'} \nonumber \\
\le & \sum_{a} (P^A(a))^{1+s} e^{sR'} 
=  e^{-s H_{1+s}(A|P^A)+ s R'} .\Label{5-14-9}
\end{align}
Combining (\ref{5-14-10}), (\ref{5-14-8}), and (\ref{5-14-9}),
for $R:=\log M$, we obtain
\begin{align*}
& \rE_{\bX} d_1(P^{f_{\bX}(A)})\\
\le &
2e^{-s H_{1+s}(A|P^A)+ s R'} 
+ e^{R+\frac{1}{2}(-s {H}_{1+s}(A|P^A) -(1-s)R')}\\
= &
3e^{-\frac{s{H}_{1+s}(A|P^A)+ s R}{1+s}},
\end{align*}
where we substitute $\frac{R+s {H}_{1+s}(A|P^A)}{1+s}$ into $R'$.
\end{proof}

Next, we consider the case when our distribution $P^{A_n}$ 
is given by the $n$-fold independent and identical distribution of 
$P^{A}$, i.e, $(P^{A})^n$.
When the random number generation rate $ \lim_{n \to \infty} \frac{1}{n}\log M_n$ is $R$,
we focus on the {\it exponential rate of decrease} of 
$\rE_{\bX} d_1(P^{f_{{\bX},n}(A_n)})$,
and consider the supremum.

When an ensemble $\{f_{{\bX},n} \}$ of hash functions is a family of universal$_2$ hash functions from $\cA^n$
to $\{1, \ldots M_n\}$,
Theorem \ref{thm1} yields that
\begin{align*}
& \liminf_{n\to\infty} \frac{-1}{n}\log 
\rE_{\bX} d_1(P^{f_{{\bX},n}(A_n)}) \nonumber \\
\ge &
\frac{s{H}_{1+s}(A|P^{A}) -s R}{1+s}
\end{align*}
for $s \in [0,1]$.
Taking the maximum over $s \in [0,1]$, we obtain
\begin{align}
& \liminf_{n\to\infty} \frac{-1}{n}\log 
\rE_{\bX} d_1(P^{f_{{\bX},n}(A_n)}) \nonumber \\
\ge &
\max_{0\le s \le 1} 
\frac{s{H}_{1+s}(A|P^{A}) -s R}{1+s}. \Label{4-19-2}
\end{align}
On the other hand, when we apply the Pinsker's inequality \cite{CKbook} to the upper bound for the mutual information
obtained by the previous paper \cite{Hayashi2},
we obtain another bound 
$\max_{0\le s \le 1} 
\frac{s{H}_{1+s}(A|P^{A}) -s R}{2}$, which is smaller than (\ref{4-19-2}).

\subsection{Protocol based on universal$_2$ hash functions: Converse part}
\Label{s2c}

In order to show the tightness of the exponential rate of decrease
(\ref{4-19-2}) under the universal$_2$ condition,
we consider the following property.

\begin{condition}[Strongly universal$_2$]\Label{C2}
For any $a\in \cA$,
${\rm Pr} \{ f_{\bX}(a)=m\} =\frac{1}{M}$.
The random variable
$f_{\bX}(a) $ is independent of $\{f_{\bX}(a')\}_{a' \neq a \in \cA}$.
\end{condition}

\begin{thm}\Label{thm2}
For any strongly universal$_2$ ensemble,
any subset $\Omega \subset \cA$ with $|\Omega | < M$
satisfies
\begin{align}
\rE_{\bX}
d_1(P^{f_{\bf X}(A)})
\ge
(1-\frac{|\Omega|}{M})^2 P^A(\Omega).
\Label{5-14-2}
\end{align}
\end{thm}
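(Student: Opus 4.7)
The plan is to combine the dual inequality $d_1(P, P(\cA) P_{\mix}) \ge 2(|T|/M - P(T))$, valid for every $T \subseteq \cM$, with a per-bucket analysis exploiting the strongly universal$_2$ structure. An initial attempt with the deterministic choice $T = f_{\bX}(\Omega)^c$ (the buckets avoided by $\Omega$) already gives $d_1(P^{f_{\bX}(A)}) \ge 2(P^A(\Omega) - |f_{\bX}(\Omega)|/M)$, since $|T| \ge M - |\Omega|$ and no $\Omega$-mass lands in $T$, whence $P^{f_{\bX}(A)}(T) \le P^A(\Omega^c)$. This is informative when $P^A(\Omega) \ge |\Omega|/M$ but vacuous otherwise, so the squared factor $(1-|\Omega|/M)^2$ must come from averaging a finer per-bucket estimate.

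For fixed $m \in \cM$ and $a^* \in \Omega$, I would introduce the unique-hit event
\begin{align*}
G_{m,a^*} := \{f_{\bX}(a^*) = m\} \cap \bigcap_{a' \in \Omega \setminus \{a^*\}} \{f_{\bX}(a') \neq m\}.
\end{align*}
Strongly universal$_2$ plus a union bound gives $\Pr[G_{m,a^*}] \ge (1-(|\Omega|-1)/M)/M \ge (1-|\Omega|/M)/M$, and for fixed $m$ the family $\{G_{m,a^*}\}_{a^* \in \Omega}$ is mutually disjoint. On $G_{m,a^*}$ the $\Omega$-contribution to $Y_m := P^{f_{\bX}(A)}(m)$ is exactly $P^A(a^*)$, and the $\Omega^c$-contribution has conditional mean $(1-P^A(\Omega))/M$ via the independence built into strongly universal$_2$. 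Hence $\rE[Y_m - 1/M \mid G_{m,a^*}] = P^A(a^*) - P^A(\Omega)/M$, and Jensen's inequality applied to the convex map $y \mapsto (y - 1/M)^+$ yields $\rE[(Y_m - 1/M)^+ \mathbf{1}_{G_{m,a^*}}] \ge \Pr[G_{m,a^*}] \cdot (P^A(a^*) - P^A(\Omega)/M)^+$.

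Summing over the disjoint family, applying $\sum_i x_i^+ \ge (\sum_i x_i)^+$, and using the identity $\sum_{a^* \in \Omega}(P^A(a^*) - P^A(\Omega)/M) = P^A(\Omega)(1-|\Omega|/M)$ produces $\rE(Y_m - 1/M)^+ \ge P^A(\Omega)(1-|\Omega|/M)^2/M$. By the $\cM$-symmetry of strongly universal$_2$ this expectation does not depend on $m$, so
\begin{align*}
\rE_{\bX} d_1(P^{f_{\bX}(A)}) = 2 \sum_m \rE(Y_m - 1/M)^+ \ge 2 P^A(\Omega)(1-|\Omega|/M)^2,
\end{align*}
which even exceeds the claimed inequality (\ref{5-14-2}) by a constant factor.

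The main obstacle is the conditional-expectation step: $G_{m,a^*}$ constrains $|\Omega|$ of the $f_{\bX}(a')$ values simultaneously, and cleanly extracting $\rE[\mathbf{1}[f_{\bX}(a) = m] \mid G_{m,a^*}] = 1/M$ for $a \notin \Omega$ uses more than the strictly pairwise part of strongly universal$_2$. The actual proof (Appendix \ref{as2}) must therefore either bound the $\Omega^c$-contribution through a more careful joint argument, or replace the disjoint family $\{G_{m,a^*}\}$ by a weaker decomposition compatible with pairwise conditioning; in either case, the factor $(1-|\Omega|/M)^2$ emerges as the product of ``probability of unique hit at $m$'' and ``effective excess contributed by the uniquely-hitting element,'' with the loss relative to the idealized sketch absorbed into the missing constant.
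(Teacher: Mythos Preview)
Your approach is essentially identical to the paper's. Appendix \ref{as2} also conditions, for each $a\in\Omega$, on the event that $a$ is the unique element of $\Omega$ in its bucket (your $\bigcup_m G_{m,a}$), bounds the probability of this event by $1-|\Omega|/M$ via the union bound, applies Jensen to pull the conditional expectation inside the absolute value, and evaluates the conditional expected bucket mass as $P^A(a)+(1-P^A(\Omega))/M$, exactly as you do. The only cosmetic differences are that the paper sums over $a\in\Omega$ (letting the bucket be $f_{\bX}(a)$) rather than over pairs $(m,a^*)$, and works with $|Y_{f_{\bX}(a)}-1/M|$ instead of $2(Y_m-1/M)^+$; your use of the identity $d_1=2\sum_m(Y_m-1/M)^+$ is what produces the extra factor of $2$ you noticed.

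The obstacle you flag is genuine, but the paper does not circumvent it. Its conditional-mean step asserts precisely that $\rE_a\bigl[\mathbf{1}\{f_{\bX}(a')=f_{\bX}(a)\}\bigr]=1/M$ for every $a'\notin\Omega$, which requires $f_{\bX}(a')$ to be independent of the full tuple $(f_{\bX}(a''))_{a''\in\Omega}$ --- strictly more than the pairwise independence literally stated in Condition \ref{C2}. The paper simply makes this move without further comment. So your expectation that Appendix \ref{as2} contains a ``more careful joint argument'' compatible with pairwise conditioning is not borne out: the proof there implicitly reads the strongly universal$_2$ hypothesis as full mutual independence (the i.i.d.\ uniform-assignment ensemble), under which both your argument and the paper's go through verbatim.
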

The proof is given in Appendix \ref{as2}.

In order to derive the inequality opposite to (\ref{4-19-2}) from Theorem \ref{thm2},
we employ the type method \cite{CKbook}.
In the type method,
when an $n$-trial data $\vec{a}_n:=(a_1,\ldots,a_n) \in {\cal A}^n$ is given,
we focus on the distribution $p(a):=\frac{\# \{i| a_i=a \}}{n}$,
which is called the empirical distribution for the data $\vec{a}_n$.
In the type method, an empirical distribution is called a type.
In the following, we denote the set of empirical distributions on $\cA$
with $n$ trials by ${\cal T}_n$.
The cardinality $|{\cal T}_n|$ is bounded by $(n+1)^{|{\cal A}|-1}$ \cite{CKbook},
which increases polynomially with the number $n$.
That is,
\begin{align}
\lim_{n \to \infty} \frac{1}{n}\log |{\cal T}_n|=0.
\Label{9-6-1}
\end{align}
This property is the key idea in the type method.
When $T_n(Q)$ represents the set of $n$-trial data whose empirical distribution is $Q$,
the cardinality of $T_n(Q)$ can be evaluated as \cite{CKbook}:
\begin{align}
\lceil \frac{e^{nH(Q)}}{|{\cal T}_n|} \rceil \le |T_n(Q)| \le \lfloor e^{nH(Q)} \rfloor,
\Label{9-7-5}
\end{align}
where $ \lceil x \rceil$ is the minimum integer $m$ satisfying $m \ge x $,
and $\lfloor x \rfloor$ is the maximum $m$ satisfying $m \le x $.
Since any element $\vec{a} \in T_n(Q)$ 
satisfies 
\begin{align}
P^{A_n}(\vec{a})=e^{-n (D(Q\|P^A)+H(Q)) }, \Label{9-7-13}
\end{align}
we obtain an important formula
\begin{align}
\frac{1}{{\cal T}_n} e^{-n D(Q\|P^A)}
\le P^{A_n}(T_n(Q)) \le 
e^{-n D(Q\|P^A)}
\Label{9-7-14}.
\end{align}
Using the above knowledge, 
we can show the following proposition:

\begin{proposition}
When $M_n = \lfloor e^{n R} \rfloor$, any sequence of strongly universal$_2$ ensembles 
$\{f_{\bX,n}\}$ 
from $\cA^n$ to $\{1, \ldots M_n\}$
satisfies 
the equation
\begin{align}
& \limsup_{n\to\infty} \frac{-1}{n}\log 
\rE_{\bX} d_1(P^{f_{\bX,n}(A_n)}) 
\le 
\min_{Q:H(Q)\le R} D(Q\|P^A),
\Label{9-7-2}
\end{align}
where $D(Q\|P^A)$ is the
Kullback-Leibler divergence $\sum_{a\in \cA} Q(a)(\log Q(a)-\log P^A(a))$.
\end{proposition}

\begin{proof}
Choose an arbitrary empirical distribution $Q\in {\cal T}_n$
satisfying that $H(Q) \le R$.
Then, due to (\ref{9-7-5}), the cardinality $|T_n(Q)|$ is less than
$\lfloor e^{nR}\rfloor$.
We choose the subset $\Omega_{n,Q}$ 
with the cardinality $\lceil  \frac{1}{2} e^{nR} \rceil $ so that
it contains at least $\lceil  \frac{|T_n(Q)|}{2} \rceil $ elements of $T_n(Q)$.
Using (\ref{9-7-5}) and (\ref{9-7-13}), we obtain
\begin{align*}
 P^{A_n}(\Omega_{n,Q})
\ge &
 \frac{|T_n(Q)|}{2}  e^{-n (D(Q\|P^A)+H(Q)) }  \\
\ge &
 \frac{e^{nH(Q)}}{2 |{\cal T}_n|} 
  e^{-n (D(Q\|P^A)+H(Q)) } .
\end{align*}
Using Theorem \ref{thm2} with $\Omega_{n,Q}$,
we obtain
\begin{align*}
& \rE_{\bX}
d_1(P^{f_{\bX,n}(A_n)}) 
\ge 
(1- \frac{\lceil  \frac{1}{2} e^{nR} \rceil}{\lfloor e^{nR}\rfloor})^2
\frac{1}{2|{\cal T}_n|}e^{-n D(Q\|P^A)}.
\end{align*}
Since $Q$ is an arbitrary empirical distribution $Q\in {\cal T}_n$
satisfying that $H(Q) \le R$,
\begin{align*}
& \rE_{\bX}
d_1(P^{f_{\bX,n}(A_n)}) \\
\ge &
(1- \frac{\lceil  \frac{1}{2} e^{nR} \rceil}{\lfloor e^{nR}\rfloor})^2
\frac{1}{2|{\cal T}_n|} 
\max_{Q \in {\cal T}_n: H(Q) \le R}
e^{-n D(Q\|P^A)}.
\end{align*}
That is,
\begin{align*}
& \frac{-1}{n}\log  \rE_{\bX}
d_1(P^{f_{\bX,n}(A_n)}) \\
\le &
\min_{Q \in {\cal T}_n: H(Q) \le R} D(Q\|P^A)
+ \frac{1}{n}\log 2|{\cal T}_n| \\
& - \frac{2}{n}\log(1- \frac{\lceil  \frac{1}{2} e^{nR} \rceil}{\lfloor e^{nR}\rfloor}).
\end{align*}
Due to the continuity of $Q \mapsto H(Q),D(Q\|P^A)$
and (\ref{9-6-1}),
the limit $n \to \infty$ yields (\ref{9-7-2}).
\end{proof}

When $R \le H(A|P^{A})$,
the equation
\begin{align}
\max_{0\le s}  \frac{s (H_{1+s}(A|P^{A}) - R)}{1+s}
= \min_{Q:H(Q)\le R} D(Q\|P^A)
\Label{9-7-1}
\end{align}
is known as the strong converse exponent in fixed source coding \cite{CKbook,Han-source},\cite[(A21)]{Haya4}.
The maximum $\max_{0\le s}  \frac{s (H_{1+s}(A|P^{A}) - R)}{1+s}$ is realized 
at $s=s_0$ when $R=R_{s_0}:= (1+s_0)\frac{d}{ds}(s H_{1+s}(A|P^{A}) )|_{s=s_0}- s_0 H_{1+s_0}(A|P^{A})$.
Since $\frac{d}{ds}R_s= (1+s)\frac{d^2}{ds^2}(s H_{1+s}(A|P^{A}) )\le 0$,
$R_s$ is monotone decreasing with $s$.

Thus, when $H(A|P^{A}) \ge R \ge R_1$ ($R_1$ is called the critical rate.), 
\begin{align}
\max_{0\le s}  \frac{s (H_{1+s}(A|P^{A}) - R)}{1+s}
=
\max_{0\le s\le 1}  \frac{s (H_{1+s}(A|P^{A}) - R)}{1+s}.
\Label{9-7-3}
\end{align}
Hence, in this case,
due to (\ref{4-19-2}), (\ref{9-7-2}), (\ref{9-7-1}), and (\ref{9-7-3}),
we obtain
\begin{align}
& \lim_{n\to\infty} \frac{-1}{n}\log 
\rE_{\bX} d_1(P^{f_{\bX,n}(A_n)}) \nonumber \\
= & 
\max_{0\le s \le 1}  \frac{s (H_{1+s}(A|P^{A}) - R)}{1+s}= 
\min_{Q:H(Q)\le R} D(Q\|P^A).
\Label{9-7-6}
\end{align}
However, when $R < R_1$,
\begin{align*}
&\max_{0\le s\le 1}  \frac{s (H_{1+s}(A|P^{A}) - R)}{1+s}
=
\frac{H_{2}(A|P^{A}) - R}{2} \\
<&
\max_{0\le s }  \frac{s (H_{1+s}(A|P^{A}) - R)}{1+s}.
\end{align*}
So, the lower bound in (\ref{4-19-2}) does not coincide with the upper bound in (\ref{9-7-2}).

\subsection{Comparison with evaluation by Holenstein-Renner \cite{Holenstein}}
\Label{s2b}
In the above derivation, 
the key point is evaluating the probability $P^A(\Omega_{R'})$,
which equals the probability $(P^A)^n\{ a\in{\cal A}^n|(P^A)^n(a) > e^{-n R'} \}$ in the $n$-i.i.d. setting.
In the community of cryptography,
the $n$-i.i.d. setting 
is not regarded as an important setting
because they are more interested in the single-shot setting.
In such a setting,
they sometimes use Holenstein-Renner's \cite{Holenstein} evaluation of $P^X(\Omega_{R'})$.
They proved the following theorem.
\begin{thm}\Label{Holenstein}
When $0 \le H(A)-R' \le \log |{\cal A}|$,
\begin{align}
(P^A)^n\{ a\in{\cal A}^n| (P^A)^n(a) > e^{-n R'} \}
\le 2^{- \frac{n (H(A)-R')^2}{2 (\log (|{\cal A}|+3))^2}}.
\end{align}
Further,
when 
$|{\cal A}| \ge 3$ and 
$0\le H(A)-R' \le \frac{\log (|{\cal A}|-1)}{12}$,
\begin{align*}
(P^A)^n\{ a\in{\cal A}^n| (P^A)^n(a) > e^{-n R'} \}
> \frac{1}{110} 2^{- \frac{12 n (H(A)-R')^2}{(\log (|{\cal A}|-1))^2}}.
\end{align*}
When $|{\cal A}| =2$,
the inequality yields the following evaluation.
When 
$0\le H(A)-R' \le \frac{\log 3}{24}$,
\begin{align*}
(P^A)^n\{ a\in{\cal A}^n| (P^A)^n(a) > e^{-n R'} \}
> \frac{1}{110} 2^{- \frac{24 n (H(A)-R')^2}{(\log 3)^2}}
\end{align*}
for even $n$.
\end{thm}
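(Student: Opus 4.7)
The plan is to recognize the event as a concentration statement for a sum of i.i.d.\ random variables. Writing $\bar{a} = (a_1, \ldots, a_n) \in \cA^n$, we have $-\frac{1}{n} \log (P^A)^n(\bar{a}) = \frac{1}{n} \sum_{i=1}^n Z_i$ where $Z_i := -\log P^A(a_i)$ are i.i.d.\ with mean $\rE[Z_i] = H(A)$. The event $(P^A)^n(\bar{a}) > e^{-nR'}$ becomes $\frac{1}{n} \sum_i Z_i < R'$, which is a lower-tail deviation of size $H(A) - R' \ge 0$. Both halves of the theorem are then two-sided tail bounds for this sum.

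For the upper bound I would apply Hoeffding's inequality after a truncation. The difficulty is that $Z_i$ can be arbitrarily large when some $P^A(a)$ is very small. My approach is to replace $Z_i$ by $\tilde{Z}_i := \min(Z_i,\, \log(|\cA|+3))$, which lies in $[0, \log(|\cA|+3)]$. Since $\tilde{Z}_i \le Z_i$, the event $\{\frac{1}{n}\sum_i Z_i < R'\}$ is contained in $\{\frac{1}{n}\sum_i \tilde{Z}_i < R'\}$, and Hoeffding applied to the bounded variables $\tilde{Z}_i$ yields an exponential bound of order $\exp(-2n(H(A)-R')^2/(\log(|\cA|+3))^2)$. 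The truncation bias $\rE[Z_i - \tilde{Z}_i] \ge 0$ must be controlled; the specific offset $+3$ inside the logarithm is engineered so that this bias is absorbed into the stated exponent uniformly over $0 \le H(A) - R' \le \log|\cA|$.

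For the lower bound I would use an anti-concentration argument via Berry-Esseen. The variance of $Z_i$ is bounded by $O((\log(|\cA|-1))^2)$ because $Z_i$ is supported on at most $|\cA|$ values with mean $H(A) \le \log|\cA|$, so a quantitative CLT yields $\rP[\frac{1}{n}\sum_i Z_i < R'] \ge c_1 \exp(-c_2\, n(H(A)-R')^2/(\log(|\cA|-1))^2)$ in the moderate-deviation regime. The hypothesis $H(A) - R' \le \log(|\cA|-1)/12$ keeps us in that regime so that the Gaussian tail approximation is sharp, and the multiplicative loss from the Berry-Esseen remainder is absorbed into the constant $1/110$. The $|\cA| = 2$ case is handled separately, as $Z_i$ is then a two-point variable whose sum lives on a lattice, forcing the restriction to even $n$ so that the lattice CLT lower bound is clean.

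The main obstacle is quantitative rather than conceptual: the truncation bias in the upper bound must be tracked tightly enough to avoid an additive correction term in the exponent, and the Berry-Esseen constants must be extracted explicitly to produce the specific numbers $1/110$ and $12$. The constants $|\cA|+3$, $1/110$, and $12$ are artefacts of the chosen truncation level and CLT version; they can plausibly be improved by a more delicate argument, which is precisely why the present paper's alternative large-deviation approach, based on the R\'{e}nyi entropy splitting used in the proof of Theorem \ref{thm1}, yields a sharper evaluation.
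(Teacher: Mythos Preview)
The paper does not contain a proof of this theorem. Theorem \ref{Holenstein} is stated as a result of Holenstein and Renner \cite{Holenstein}, introduced with ``They proved the following theorem,'' and is quoted only for the purpose of comparing its exponent with the paper's own bound (\ref{5-14-9}). No argument for it appears anywhere in the text or the appendices, so there is no paper proof to compare your proposal against.

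As a standalone sketch, your plan is the right shape---the event is indeed a lower-tail deviation for the i.i.d.\ sum of the self-informations $Z_i = -\log P^A(a_i)$, and Hoeffding plus a Berry--Esseen-type anti-concentration is essentially the route taken in \cite{Holenstein}. One point to watch in your upper-bound step: after truncating $Z_i$ to $\tilde Z_i = \min(Z_i,\log(|\cA|+3))$ you correctly note that $\{\frac{1}{n}\sum Z_i < R'\}\subset\{\frac{1}{n}\sum \tilde Z_i < R'\}$, but Hoeffding then gives an exponent proportional to $(\rE[\tilde Z_i]-R')^2$, and since $\rE[\tilde Z_i]\le H(A)$ the bias goes the \emph{wrong} way for the stated bound in terms of $(H(A)-R')^2$. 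The actual argument in \cite{Holenstein} handles this by bounding the truncation loss $H(A)-\rE[\tilde Z_i]$ explicitly (this is where the specific offset $+3$ enters) and absorbing it into the range constant, not simply by invoking Hoeffding on the truncated variables. Your sketch acknowledges that ``the truncation bias must be tracked tightly,'' but as written it does not indicate how to recover the exponent $(H(A)-R')^2$ rather than the smaller $(\rE[\tilde Z_i]-R')^2$.
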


Our evaluation (\ref{5-14-9}) of 
$(P^A)^n\{ a\in{\cal A}^n|(P^A)^n(a) > e^{-n R'} \}$
contains the parameter $0\le s\le 1$.
Since this parameter is arbitrary,
it is natural to compare the upper bound $\min_{0 \le s \le 1}  e^{-n (s {H}_{1+s}(X|P^X)- s R')}$ given by 
(\ref{5-14-9}) with that by Theorem \ref{Holenstein}.
That is, using (\ref{5-14-9}), we obtain the exponential evaluation
\begin{align*}
& \lim_{n \to \infty} \frac{-1}{n}\log (P^A)^n\{ a\in{\cal A}^n| (P^A)^n(a) > e^{-n R'} \} \\
\ge  & \max_{0 \le s } s {H}_{1+s}(A|P^A)- s R',
\end{align*}
while 
Theorem \ref{Holenstein} yields that
\begin{align*}
& \lim_{n \to \infty} \frac{-1}{n}\log (P^A)^n\{ a\in{\cal A}^n| (P^A)^n(a) > e^{-n R'} \}\\
\ge & \frac{ (H(A)-R')^2}{2 (\log (|{\cal A}|+3))^2}\log 2 .
\end{align*}
In this case, the upper bound is
$\frac{12 \log 2 (H(A)-R')^2}{(\log (|{\cal A}|-1))^2}$ for $|{\cal A}| \ge 3$
and 
$\frac{24 \log 2 (H(A)-R')^2}{(\log 3)^2}$ for $|{\cal A}| = 2$.

In fact, the probability $P^A(\Omega_{R'})$ is the key quantity in the method of information spectrum, 
which is a unified method in information theory \cite{Han}.
When the method of information spectrum is applied to an i.i.d. source,
the probability $P^A(\Omega_{R'})$ is evaluated by applying Cram\'{e}r's Theorem (see \cite{DZ}) to the random variable $\log P^A(a)$.
Then we obtain 
\begin{align}
& \lim_{n \to \infty} \frac{-1}{n}\log (P^A)^n\{ a\in{\cal A}^n| (P^A)^n(a) > e^{-n R'} \}\nonumber\\
= & \max_{0 \le s } s {H}_{1+s}(A|P^A)- s R'\Label{9-9-10}
\end{align}
for $R \le H(A)$.
Since $s \mapsto s {H}_{1+s}(X|P^X)$ is concave,
when $H(A) \ge R \ge H_2'(A|P^A)$,
the maximization (\ref{9-9-10}) can be attained with $s\in [0,1]$, i.e.,
\begin{align*}
& \lim_{n \to \infty} \frac{-1}{n}\log (P^A)^n\{ a\in{\cal A}^n| (P^A)^n(a) > e^{-n R'} \} \\
= & \max_{0 \le s \le 1} s {H}_{1+s}(A|P^A)- s R'.
\end{align*}
which implies that
our evaluation (\ref{5-14-9}) gives the tight bound for exponential rate of decrease for 
the probability $(P^A)^n\{ a\in{\cal A}^n| (P^A)^n(a) > e^{-n R'} \}$.
In fact, the difference among these bounds is numerically given in Fig. \ref{f2}.
Therefore, we can conclude that
our evaluation (\ref{5-14-9}) is much better than that by Holenstein-Renner \cite{Holenstein}. 
That is, the combination of Lemma 1 and (\ref{5-14-9})
is essential for deriving the tight exponential bound.

\begin{figure}[htbp]
\begin{center}
\scalebox{1.0}{\includegraphics[scale=0.8]{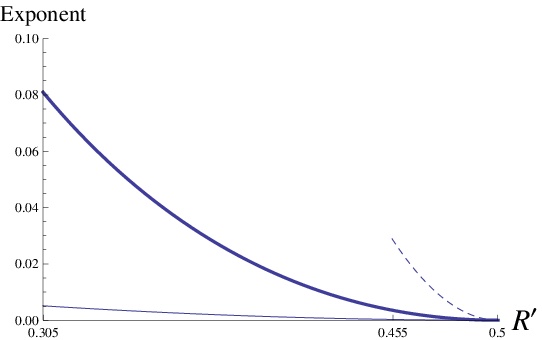}}
\end{center}
\caption{
Evaluation of
$\lim_{n \to \infty} \frac{-1}{n}\log (P^A)^n\{a\in{\cal A}^n| (P^A)^n(a) > e^{-n R'} \}$.
Thick line: $\max_{0 \le s \le 1} s {H}_{1+s}(A|P^A)- s R'$ (The present paper),
Normal line: $\frac{ (H(A)-R')^2}{2 (\log (|{\cal A}|+3))^2}\log 2 $ 
(Lower bound by \cite{Holenstein}),
Dashed line: $\frac{24 \log 2 (H(A)-R')^2}{(\log 3)^2}$ 
(Upper bound by \cite{Holenstein}).
Here,
$P^A$ is chosen to be the binary distribution 
$P^A(0)= \alpha$, $P^A(1)=1-\alpha$ with $\alpha=0.200$.
Then, $h(\alpha)=H(A)=0.500$,
$\frac{d (sH_{1+s}(A))}{ds}|_{s=1}=0.305$, and $H(A)- \frac{\log 3}{24}=0.455$.}
\Label{f2}
\end{figure}%

\subsection{Comparison with smooth min-entropy}\Label{s2d}
In subsection \ref{s2a}, we treated smoothing of R\'{e}nyi entropy of order $2$.
In this subsection, we compare this method with 
smooth min-entropy, which is more familiar in the community of cryptography \cite{Ren05}.
When we employ the min-entropy 
$H_{\min}(A|P^{A}):= -\log \max_{a\in A}P^A(a)$
instead of R\'{e}nyi entropy of order $2$ in (\ref{5-14-6}), 
we obtain the following inequality:
\begin{align}
\rE_{\bX}
d_1(P^{f_{\bf X}(A)})
\le
M^{\frac{1}{2}} e^{-\frac{H_{\min}(A|P^{A})}{2}}. 
\Label{5-14-6-2}
\end{align}
Now, we choose another distribution $\tilde{P}^A$ satisfying $d_1(\tilde{P}^A,P^A) \le \epsilon$.
Using 
(\ref{1-5-4}), (\ref{5-14-6-2}), and 
$\epsilon$-smooth min-entropy
$H_{\min,\epsilon}(A|P^{A}):=\max_{P:d_1(\tilde{P}^A,P^A)\le \epsilon} H_{\min}(A|\tilde{P}^A)$,
we can show the following inequality \cite{Ren05}
\begin{align}
&\rE_{\bX}
d_1(P^{f_{\bf X}(A)}) 
=
\rE_{\bX}
d_1(P^{f_{\bf X}(A)}, P^{A}({\cal A}) P^{f_{\bf X}(A)}_{\mix}) \nonumber \\
\le &
\rE_{\bX}
d_1(\tilde{P}^{f_{\bf X}(A)}, \tilde{P}^{A}({\cal A}) P^{f_{\bf X}(A)}_{\mix}) 
+
d_1(\tilde{P}^{f_{\bf X}(A)}, P^{f_{\bf X}(A)}) \nonumber \\
&+
d_1(\tilde{P}^{A}({\cal A}) P^{f_{\bf X}(A)}_{\mix}, P^{A}({\cal A}) P^{f_{\bf X}(A)}_{\mix}) 
\nonumber  \\
\le &
M^{\frac{1}{2}} e^{-\frac{H_{\min}(A|\tilde{P}^{A})}{2}}
+
d_1(\tilde{P}^{A}(A), P^{A}(A)) \nonumber \\
& +
|\tilde{P}^{A}({\cal A}) P^{f_{\bf X}(A)}_{\mix}, P^{A}({\cal A}) | 
\nonumber \\
\le &
M^{\frac{1}{2}} e^{-\frac{H_{\min,\epsilon}(A|P^{A})}{2}}
+2\epsilon.
\Label{5-14-6-3}
\end{align}
Next, 
using the subdistribution $P^A_{R'}$ defined in proof of Theorem \ref{thm1},
we choose $\epsilon$ to be
$d_1(P^A,P^A_{R'})=P^A(\Omega_{R'})$
for a given $R'\ge \log M$.
Then, 
\begin{align}
&\rE_{\bX}
d_1(P^{f_{\bf X}(A)})
\le  
M^{\frac{1}{2}} e^{-\frac{H_{\min,\epsilon}(A|P^{A})}{2}}
+2\epsilon \nonumber \\
\le &
M^{\frac{1}{2}} e^{-\frac{H_{\min}(A|P^{A}_{R'})}{2}}
+2\epsilon  \nonumber \\
\le &
\sqrt{\frac{M}{e^{R'}}}
+2 
P^A
\{
a \in {\cal A}|
P^A(a) > e^{-R'}
\}
\Label{speLem} .
\end{align}
Applying the inequality (\ref{5-14-9}),
we obtain
\begin{align*}
\rE_{\bX}
d_1(P^{f_{\bf X}(A)})
\le 
M^{\frac{1}{2}} e^{-R'/2}
+2 e^{-s H_{1+s}(A|P^A)+ s R'} 
\end{align*}
for $s \ge 0$.
When $R=\log M$,
\begin{align}
&\rE_{\bX}
d_1(P^{f_{\bf X}(A)})
\le
e^{(R-R')/2}+2 e^{-(s H_{1+s}(A|P^A)-s R')}.
\end{align}

Now, we choose $R'=R'_0$ such that
$(R'_0-R)/2=
s H_{1+s}(A|P^A)-s R'_0$,
which implies
$
R'_0= \frac{R+2 s H_{1+s}(A|P^A)}{1+2 s}$.
Hence,
$(R'_0-R)/2=\frac{s H_{1+s}(A|P^A)-s R }{1+2 s}$.
Thus, we obtain
\begin{align}
&\rE_{\bX}
d_1(P^{f_{\bf X}(A)})
\le
3 e^{- \frac{s H_{1+s}(A|P^A)-s R}{1+2s}}.
\end{align}
Taking the minimum over $s >0$, we have
\begin{align}
&\rE_{\bX}
d_1(P^{f_{\bf X}(A)})
\le
3 e^{- \max_{s\ge 0} \frac{s H_{1+s}(A|P^A)-s R}{1+2s}}.\Label{4-19-2-2}
\end{align}

Next, we consider the case when our distribution $P^{A_n}$ 
is given by the $n$-fold independent and identical distribution of 
$P^{A}$, i.e, $(P^{A})^n$.
Similar to (\ref{4-19-2}), (\ref{4-19-2-2}) yields
\begin{align}
& \liminf_{n\to\infty} \frac{-1}{n}\log 
\rE_{\bX} d_1(P^{f_{{\bX},n}(A_n)}) \nonumber \\
\ge &
\max_{0\le s } 
\frac{s{H}_{1+s}(A|P^{A}) -s R}{1+2s} \nonumber\\
=&
\max_{0\le t \le 1 } 
\frac{t{H}_{1/(1-t)}(A|P^{A}) -t R}{1+t}, \Label{4-19-2-1}
\end{align}
where $t=\frac{s}{1+s}$.
In fact, as shown in Appendix \ref{8-03-e}, 
the exponential decreasing rate of the right hand side of (\ref{5-14-6-3})
is calculated as
\begin{align}
& \lim_{n\to\infty} \frac{-1}{n}\log 
\min_{\epsilon}(e^{\frac{nR}{2}} e^{-\frac{H_{\min,\epsilon}(A^n|(P^{A})^n)}{2}}
+2\epsilon )
\nonumber \\
=&
\max_{0\le t \le 1 } 
\frac{t{H}_{1/(1-t)}(A|P^{A}) -t R}{1+t}. \Label{8-03-a}
\end{align}
Hence, we can consider that 
$\max_{0\le t \le 1 } 
\frac{t{H}_{1/(1-t)}(A|P^{A}) -t R}{1+t}$
expresses the optimal exponential decreasing rate for the method of smooth min-entropy.
For $0\le t \le 1$,
the relation $t \le \frac{t}{1-t}$ implies the inequality
${H}_{1/(1-t)}(A|P^{A}) \le {H}_{1+t}(A|P^{A})$.
Hence,
the bound
$\max_{0\le s } \frac{s{H}_{1+s}(A|P^{A}) -s R}{1+2s}$
is smaller than
the presented bound
$\max_{0\le s \le 1} \frac{s{H}_{1+s}(A|P^{A}) -s R}{1+s}$,
whose numerical comparison is illustrated in Fig. \ref{f20}.

\begin{figure}[htbp]
\begin{center}
\scalebox{1.0}{\includegraphics[scale=0.8]{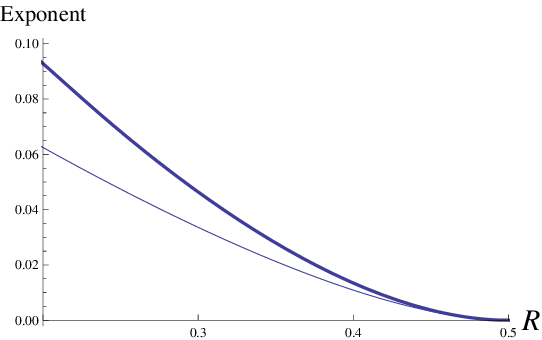}}
\end{center}
\caption{
Comparison between
$\max_{0\le s } \frac{s{H}_{1+s}(A|P^{A}) -s R}{1+s}$
and
$\max_{0\le s } \frac{s{H}_{1+s}(A|P^{A}) -s R}{1+2s}$.
Thick line: 
$\max_{0\le s } \frac{s{H}_{1+s}(A|P^{A}) -s R}{1+s}$
(Smoothing of R\'{e}nyi entropy of order $2$. The present paper),
Normal line: 
$\max_{0\le s } \frac{s{H}_{1+s}(A|P^{A}) -s R}{1+2s}$.
(Smoothing of min entropy. \cite{Ren05}).
Here,
$P^A$ is chosen to be the binary distribution 
$P^A(0)=\alpha$, $P^A(1)=1-\alpha$ with $\alpha=0.200$.
Then, $h(\alpha)=H(A)=0.500$.}
\Label{f20}
\end{figure}%

\section{Specialized protocol for uniform random number generation}
\Label{s3}
\subsection{Main result of this section}
Next, we consider 
a function $f$ from $\cA$ to $\{1, \ldots, M\}$
specialized to a given probability distribution $P^A$.
This problem is called intrinsic randomness, 
which was studied with general source by 
Vembu and Verd\'{u} \cite{VV}.
The previous paper \cite{H08} discussed the relation between the second order asymptotic rate and 
the central limit theorem.
In the following, for the comparison with the exponential rate of decrease for (\ref{9-7-1}),
we prove the following theorem,
which gives the optimal exponential rate of decrease  
for a given rate of uniform random number generation.

\begin{thm}\Label{t9-9-1}
When $\frac{d (s{H}_{1+s}(A|P))}{ds}|_{s=1} \le R $, 
we obtain
\begin{align}
& \lim_{n\to\infty} \frac{-1}{n}
\log \min_{f_n\in {\cal F}_n(R) } d_1(P^{f_n(A_n)}) \nonumber \\
=& \max_{0\le s \le 1}  s (H_{1+s}(A|P^{A}) - R) ,
\Label{9-7-10}
\end{align}
where
${\cal F}_n(R)$ is the set of functions $f_n$ from ${\cal A}^n$ to
$\{1, \ldots, \lfloor e^{nR} \rfloor \}$.
\end{thm}

Combining (\ref{9-7-6}) and Theorem \ref{t9-9-1},
we can compare the performances between 
a random universal protocol and the best specialized protocol.
So, our exponential rate of decrease for the protocol based on universal$_2$ hash functions  
is slightly smaller than 
the optimal exponential rate of decrease for specialized protocols.

In order to prove Theorem \ref{t9-9-1}, 
we will show the following two inequalities:
\begin{align}
& \limsup_{n\to\infty} \frac{-1}{n}
\log \min_{f_n\in {\cal F}_n(R) } d_1(P^{f_n(A_n)}) \nonumber \\
\le & \max_{0\le s \le 1}  s (H_{1+s}(A|P^{A}) - R)  
\Label{1-5-1} \\
& \liminf_{n\to\infty} \frac{-1}{n}
\log \min_{f_n\in {\cal F}_n(R) } d_1(P^{f_n(A_n)}) \nonumber \\
\ge & \max_{0\le s \le 1}  s (H_{1+s}(A|P^{A}) - R)  
\Label{1-5-2} .
\end{align}
Inequality (\ref{1-5-1}) is called the converse part
and 
Inequality (\ref{1-5-2}) is called the direct part
in the information theory community.
In order to show the respective inequalities,
we prepare respective lemmas (Lemmas \ref{LemDI} and \ref{LemCO})
in the non-asymptotic setting 
in Subsection \ref{s32}.
In Subsection \ref{s33},
using Lemma \ref{LemCO}
and the concavity property, 
we show the converse part (\ref{1-5-1}).
Also,
using Lemma \ref{LemDI}, we show 
the direct part (\ref{1-5-1}).
In the latter derivation,
we employ again the method of types \cite{CKbook}.

\subsection{Non-asymptotic evaluation}\Label{s32}
In order to treat the non-asymptotic case,
we introduce the notation:
\begin{align*}
[x]_+&:=\left\{
\begin{array}{ll} 
x & \hbox{ if } x \ge 0 \\
0 & \hbox{ if } x < 0 .
\end{array} 
\right.
\end{align*}
Then, the $L_1$ norm for two 
normalized distributions $P$ and $Q$ can be simplified to
\begin{align}
\sum_{a}|P(a)-Q(a)|
=
2 \sum_{a}[P(a)-Q(a)]_+,
\end{align}
which is a useful formula for the following discussion.

Hence, we obtain
the following lemma, which is 
useful for our proof of the direct part (\ref{1-5-2}).
\begin{lem}\Label{LemDI}
Any probability distribution $P^A$
and any function $f$ from $\cA$ to $\{1, \ldots, M\}$
satisfy that
\begin{align}
 d_1(P^{f(A)}) 
\ge 
P^A \{a\in{\cal A}| P^A(a)\ge \frac{2}{M}\}.\Label{9-8-1}
\end{align}
\end{lem}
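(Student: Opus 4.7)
The plan is to exploit the elementary identity that between two probability distributions the $L_1$ distance equals twice the total positive deviation, and then restrict attention to the bins where the image distribution $P^{f(A)}$ is at least twice the uniform value $1/M$. Concretely, set $B:=\{m\in\{1,\ldots,M\} : P^{f(A)}(m) \ge 2/M\}$. Then I would first write
\begin{align*}
d_1(P^{f(A)}) = 2 \sum_{m: P^{f(A)}(m)\ge 1/M} \!\!(P^{f(A)}(m)-1/M)
\ge 2 \sum_{m \in B}(P^{f(A)}(m)-1/M).
\end{align*}

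Next, for every $m\in B$ the inequality $P^{f(A)}(m)-1/M \ge P^{f(A)}(m)/2$ holds by definition of $B$, so the right hand side is bounded below by $\sum_{m\in B} P^{f(A)}(m) = P^{f(A)}(B)$.

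The final step is a pullback to the source side. Let $S:=\{a\in\cA : P^A(a)\ge 2/M\}$. For every $a\in S$ we have $P^{f(A)}(f(a)) \ge P^A(a)\ge 2/M$, so $f(S)\subseteq B$. Since $f$ is a function, the preimages $\{f^{-1}(m)\}_m$ partition $\cA$, and therefore
\begin{align*}
P^{f(A)}(B) \ge P^{f(A)}(f(S)) = \sum_{m\in f(S)} \sum_{a\in f^{-1}(m)} P^A(a) \ge P^A(S),
\end{align*}
which yields the claimed bound $d_1(P^{f(A)})\ge P^A(S)$.

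There is no real obstacle; the argument is a short counting inequality. The only subtle point is the constant $2$: halving the lower bound on $P^{f(A)}(m)-1/M$ by the factor $1/2$ is exactly balanced by the factor $2$ coming from the positive deviation identity, which is why the threshold $2/M$ (rather than something like $(1+\epsilon)/M$) gives a clean coefficient of $1$ in the final inequality. I would present the argument in these three short steps, without invoking any of the universal hashing or R\'{e}nyi entropy machinery from earlier in the paper.
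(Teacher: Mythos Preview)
Your proof is correct and follows essentially the same route as the paper's: both use the identity $d_1(P^{f(A)})=2\sum_{m:P^{f(A)}(m)\ge 1/M}(P^{f(A)}(m)-1/M)$, then exploit that any bin containing an element of $S=\{a:P^A(a)\ge 2/M\}$ has mass at least $2/M$, so its deviation from $1/M$ is at least half its mass. The only cosmetic difference is that the paper lower-bounds the bin deviation directly by half the \emph{heavy} mass $\sum_{a\in f^{-1}\{i\}\cap S}P^A(a)/2$ and sums, whereas you first bound by the full bin mass and then use $f(S)\subseteq B$ and $S\subseteq f^{-1}(f(S))$; the arithmetic is the same.
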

\begin{proof}

Any positive numbers $\alpha_1,\ldots, \alpha_k$
satisfies
\begin{align}
[\sum_{i=1}^k \alpha_i -\frac{1}{M}]_+
\ge
\sum_{i=1}^k 
[\alpha_i -\frac{1}{M}]_+ \Label{9-10-1}.
\end{align}
When $P^A(a)\ge \frac{2}{M}$,
$P^A(a)-\frac{1}{M}\ge \frac{1}{M}$,
which implies that 
\begin{align}
& 2[P^A(a)-\frac{1}{M}]_+ 
=
2(P^A(a)-\frac{1}{M}) \nonumber \\
\ge &
P^A(a)-\frac{1}{M}+\frac{1}{M}= P^A(a).
\Label{9-10-2}
\end{align}
Thus, we obtain
\begin{align}
& \sum_{b}
|P^A(f^{-1}(b ))-\frac{1}{M}| 
=  
2 \sum_{b}
[P^A(f^{-1}(b))-\frac{1}{M}]_+ \nonumber \\
\ge &
2 \sum_{a\in{\cal A}}
[P^A(a)-\frac{1}{M}]_+ \Label{9-10-3}\\
\ge &
2 
\sum_{a\in{\cal A}: P^A(a)\ge \frac{2}{M}}
[P^A(a)-\frac{1}{M}]_+ \nonumber \\
\ge &
\sum_{a\in{\cal A}: P^A(a)\ge \frac{2}{M}}
P^A(a),\Label{9-10-4}
\end{align}
where (\ref{9-10-3}) and (\ref{9-10-4}) follows from (\ref{9-10-1}) and (\ref{9-10-2}).
Therefore, we obtain (\ref{9-8-1}).
\end{proof}

In order to show the converse part, 
we prepare the following lemma.
\begin{lem}\Label{Lem9-9}
Assume that 
for two integers $M \ge N$,
two positive number sequences $\alpha_1, \ldots, \alpha_{N}$
and $\beta_1,\ldots,\beta_{M}$ satisfy that
$\sum_{i=1}^N \alpha_i \ge
\sum_{i=1}^M \beta_i $.
Then, there exists a map $f$ from $\{1, \ldots, M\}$ to $\{1, \ldots, N\}$
such that
\begin{align}
\sum_{i=1}^{N}
[\sum_{j\in f^{-1}(i)}\beta_j - \alpha_i ]_+
\le N \max_j \beta_j .
\Label{9-9-3}
\end{align}
\end{lem}
\begin{proof}
First, we define $f(1):=1$.
For $j>1$, we define $f(j)$ inductively.
When $\sum_{j'\in f^{-1}(f(j-1))} \beta_{j'} < \alpha_{f(j-1)}$,
we define $f(j):=f(j-1)$.
Otherwise, we define $f(j):=f(j-1)+1$.
Then the function satisfies the condition (\ref{9-9-3}).
\end{proof}

Now we consider the case when our distribution $P^{A_n}$ 
is given by the $n$-fold independent and identical distribution of 
$P^{A}$, i.e, $(P^{A})^n$.
Using Lemma \ref{Lem9-9}, we have the following lemma,
which is 
useful for our proof of the converse part (\ref{1-5-1}).

\begin{lem}\Label{LemCO}
For any probability distribution $P^A$,
there exists a function $f_n$ from $\cA^n$ to $\{1, \ldots, M_n\}$
such that
\begin{align}
& d_1(P^{f_n(A_n)}) \nonumber \\
\le &
2(P^A)^n \{ a\in{\cal A}^n| (P^A)^n(a)\ge \frac{1}{M_n}\}\nonumber \\
&+ 2\sum_{Q\in {\cal T}_n^1[M_n]} M_n 
e^{-n (D(Q\|P^A) + H(Q))} 
\cdot
(P^A)^n(T_n(Q)) 
\nonumber \\
&+ 2 |{\cal T}_n| \max_{Q\in {\cal T}_n^2[M_n]} e^{-n (D(Q\|P^A) + H(Q))} 
\Label{9-9-1}
\end{align}
where 
\begin{align*}
{\cal T}_n^1[M_n] & :=\{ Q \in {\cal T}_n| D(Q\| P^A)+H(Q) \ge \frac{1}{n}\log M_n \} \\
{\cal T}_n^2[M_n] & :=\{ Q \in {\cal T}_n| (P^A)^n(T_n(Q)) < \frac{1}{M_n} \} .
\end{align*}
\end{lem}
\begin{proof}
In the first step, we define the function $f_n$.
In the second step, we show that the function satisfies (\ref{9-9-1}).

We divide ${\cal T}_n$ into three parts:
\begin{align*}
\tilde{\cal T}_n^0[M_n]&:=\{ Q\in {\cal T}_n|e^{n (D(Q\|P^A)+H(Q))} \le M_n\} \\
\tilde{\cal T}_n^1[M_n]&:=\{ Q\in (\tilde{\cal T}_n^0[M_n])^c
\cap {\cal T}_n |(P^A)^n(T_n(Q)) \ge  \frac{1}{M_n} \} \\
\tilde{\cal T}_n^2[M_n]&:=\{ Q\in (\tilde{\cal T}_n^0[M_n])^c
\cap {\cal T}_n |(P^A)^n(T_n(Q)) < \frac{1}{M_n} \} ,
\end{align*}
where $(\tilde{\cal T}_n^0[M_n])^c$ is the complement of $\tilde{\cal T}_n^0[M_n]$.
These three parts have the following relation with the above two parts:
\begin{align*}
\tilde{\cal T}_n^1[M_n]\subset {\cal T}_n^1[M_n] ,\quad
\tilde{\cal T}_n^2[M_n]\subset {\cal T}_n^2[M_n] .
\end{align*}

By using the integer $n_Q:=\lfloor \frac{(P^A)^n(T_n(Q))}{1/M_n}
\rfloor=\lfloor M_n (P^A)^n(T_n(Q))\rfloor$,
the conditions for $
\tilde{\cal T}_n^1[M_n]$ and $\tilde{\cal T}_n^2[M_n]$ are written as
$n_Q \ge 1$ and $n_Q <1$, respectively.
Note that, since $n_Q$ is a non-negative integer,
$n_Q <1$ is equivalent to $n_Q =0$.

Due to (\ref{9-7-13}),
the condition that $e^{n (D(Q\|P^A)+H(Q))} \le M_n$ is equivalent with the condition that
$P^{A_n}(a) \ge \frac{1}{M_n}$ for $a \in T_n(Q)$.
Hence,
\begin{align}
 (P^A)^n \{ a\in{\cal A}^n| (P^A)^n(a)\ge \frac{1}{M_n}\}
= \sum_{Q \in {\cal T}_n^0} (P^A)^n(T_n(Q))\Label{2-2-6}.
\end{align}
So, 
\begin{align*}
& (P^A)^n \{ a\in{\cal A}^n| (P^A)^n(a)\ge \frac{1}{M_n}\}
+
\sum_{Q \in \tilde{\cal T}_n^1[M_n]} \frac{n_Q}{M_n} \\
\le &
\sum_{Q \in \tilde{\cal T}_n^0[M_n]} (P^A)^n(T_n(Q))
+
\sum_{Q \in \tilde{\cal T}_n^1[M_n]} (P^A)^n(T_n(Q))
\le 1.
\end{align*}
Since 
\begin{align*}
& \frac{1}{M_n} \sum_{Q \in \tilde{\cal T}_n^0[M_n]} |T_n(Q)| 
= 
\frac{1}{M_n} |\{ a\in{\cal A}^n| (P^A)^n(a)\ge \frac{1}{M_n}\}| \\
\le & (P^A)^n \{ a\in{\cal A}^n|(P^A)^n(a)\ge \frac{1}{M_n}\}  ,
\end{align*}
we have
\begin{align*}
\sum_{Q \in \tilde{\cal T}_n^0[M_n]} |T_n(Q)| 
+\sum_{Q \in \tilde{\cal T}_n^1[M_n]} n_Q
\le M_n.
\end{align*}
Therefore, 
we can choose $f_n'$ on $\Omega':=
\cup_{Q\in \tilde{\cal T}_n^0[M_n] \cup \tilde{\cal T}_n^1[M_n]}T_n(Q)$
satisfying the following conditions.
\begin{enumerate}
\item
For $Q,Q'\in \tilde{\cal T}_n^0[M_n] \cup \tilde{\cal T}_n^1[M_n]$, 
$f_n'(T_n(Q))\cap f_n'(T(Q'))=\emptyset $.
\item
$f_n'|_{T_n(Q)}$ is injective for $Q\in \tilde{\cal T}_n^0[M_n]$.
\item
$|f_n'(T_n(Q))|=n_Q$ for $Q\in \tilde{\cal T}_n^1[M_n]$.
\item
Any type $Q\in \tilde{\cal T}_n^1[M_n]$ satisfies that $|{f_n'}^{-1}(b)| \le \frac{|T_n(Q)|}{n_Q}$ for $b \in f_n'(T_n(Q))$.
\end{enumerate}
Then, for $Q\in \tilde{\cal T}_n^1[M_n]$, we obtain
\begin{align}
P^{f_n'(A_n)} (b) \le \frac{1}{M_n} + e^{-n (D(Q\|P^A)+H(Q))},
~\forall 
b \in f_n'(T_n(Q)).
\Label{2-2-1}
\end{align}
From the construction, 
\begin{align*}
\sum_{b \in f_n'(\Omega')} 
P^{f_n'(A_n)} (b) 
\ge \frac{1}{M_n} |f_n'(\Omega')|.
\end{align*}
That is, 
\begin{align}
\sum_{a \in (\Omega')^c} P^{A_n }(a)
\le
\frac{1}{M_n} | (f_n'(\Omega'))^c|.
\Label{9-9-7}
\end{align}

Next, we define $f_n$ on the whole set
by modifying $f_n'$ as follows.
\begin{enumerate}
\setcounter{enumi}{4}
\item
$f_n$ is the same as $f_n'$ on $\Omega'$.
\item
Due to (\ref{9-9-7}), we can apply Lemma \ref{Lem9-9} to the case when
$\{1, \ldots, N\}= (f_n'(\Omega'))^c$,
$\{1, \ldots, M\}= (\Omega')^c$,
$\alpha_{b}=\frac{1}{M_n}$ for $b \in (f_n'(\Omega'))^c$
and $\beta_a=P^{A_n}(a)$ for $a \in (\Omega')^c$.
Following Lemma \ref{Lem9-9},
we define the map $f_n|_{(\Omega')^c}$ from $(\Omega')^c$ to $(f_n'(\Omega'))^c$.
\end{enumerate}

Our remaining task is to evaluate the value $\sum_{b} [P^{f_n(A_n)}(b)-\frac{1}{M_n}]_+$.
Now, we define 
\begin{align*}
C(Q)&:=
\sum_{b\in f_n(T_n(Q))}
[P^{f_n(A_n)}(b)-\frac{1}{M_n}]_+.
\end{align*}
Then, (\ref{2-2-6}) implies that 
\begin{align}
\sum_{Q\in \tilde{\cal T}_n^0[M_n]} C(Q)
\le
(P^A)^n \{ a\in{\cal A}^n| (P^A)^n(a)\ge \frac{1}{M_n}\}.\Label{2-2-2}
\end{align}
For $Q\in  \tilde{\cal T}_n^1[M_n]$, 
(\ref{2-2-1}) implies
\begin{align}
C(Q) \le &
n_Q e^{-n D(Q\|P^A) -n H(Q)}  \nonumber \\
\le &
M_n e^{-n D(Q\|P^A) -n H(Q)} 
\cdot (P^A)^n(T_n(Q)) 
.\Label{2-2-3}
\end{align}
Thus, (\ref{2-2-2}) and (\ref{2-2-3}) imply
\begin{align}
& \sum_{b \in f_n'(\Omega') } [P^{f_n(A_n)}(b)-\frac{1}{M_n}]_+ \nonumber \\
\le &
(P^A)^n \{ a\in{\cal A}^n| (P^A)^n(a)\ge \frac{1}{M_n}\}\nonumber \\
&+ \sum_{Q\in {\cal T}_n^1[M_n] } 
M_n e^{-n D(Q\|P^A) -n H(Q)}
\cdot
(P^A)^n(T_n(Q)) 
\Label{9-9-6}.
\end{align}
Recall the condition 6). Lemma \ref{Lem9-9} guarantees that
\begin{align}
& \sum_{b \in (f_n'(\Omega'))^c } [P^{f_n(A_n)}(b)-\frac{1}{M_n}]_+ \nonumber \\
\le &
|(f_n'(\Omega'))^c | \max_{Q \in \tilde{\cal T}_n^2[M_n]} 
e^{-n (D(Q\|P^A) + H(Q))} \nonumber \\
\le &
|{\cal T}_n | \max_{Q \in {\cal T}_n^2[M_n]} e^{-n (D(Q\|P^A) + H(Q))} .
\Label{2-2-4}
\end{align}
Combining (\ref{9-9-6}) and (\ref{2-2-4}), we obtain
(\ref{9-9-1}).
\end{proof}

\subsection{Asymptotic evaluation}\Label{s33}
Next, we proceed to the asymptotic evaluation.
First, using Cram\'{e}r's Theorem \cite{DZ}, we obtain
\begin{align}
&\max_{0\le s }  s {H}_{1+s}(A|P^{A}) -s R \nonumber \\
=&\lim_{n \to \infty} \frac{-1}{n}\log 
(P^A)^n \{ a\in{\cal A}^n| (P^A)^n(a)\ge \frac{1}{e^{nR}}\} 
\Label{9-7-11}
\end{align}
Hence, Equality (\ref{9-7-11}) and Lemma \ref{LemDI} imply
\begin{align}
& \limsup_{n\to\infty} \frac{-1}{n}\log \min_{f_n \in {\cal F}_n(R) } d_1(P^{f_n(A_n)}) \nonumber \\
\le & \max_{0\le s }  s (H_{1+s}(A|P^{A}) - R) 
\Label{9-7-10-a}.
\end{align}
Since $s \mapsto s H_{1+s}(A|P^{A})$ is concave, 
when $\frac{d (s{H}_{1+s}(A|P))}{ds}|_{s=1} \le R $, 
the maximum $\max_{0\le s }  s (H_{1+s}(A|P^{A}) - R) $
is realized at $s \in [0,1]$, i.e., 
$\max_{0\le s \le 1}  s (H_{1+s}(A|P^{A}) - R) = \max_{0\le s }  s (H_{1+s}(A|P^{A}) - R) $.
Therefore, we obtain the converse part (\ref{1-5-1}).

In order to show the direct part (\ref{1-5-2}), 
we will show the following lemma
by employing Lemma \ref{LemDI}.
\begin{lem}\Label{Lem9-9-2}
\begin{align}
& \liminf_{n\to\infty} \frac{-1}{n}\log \min_{f_n\in {\cal F}_n(R) } d_1(P^{f_n(A_n)}) \nonumber \\
\ge & \max_{0\le s \le 1}  s (H_{1+s}(A|P^{A}) - R) 
\Label{9-7-10-b}.
\end{align}
\end{lem}

In order to show Lemma \ref{Lem9-9-2},
we prepare the following lemma,
whose proof is given in Appendix \ref{al9-9-1}.

\begin{lem}\Label{l9-9-1}
When 
$\frac{d (s{H}_{1+s}(A|P))}{ds}|_{s=1} \le R $,
\begin{align}
&\min_{Q:H(Q)+D(Q\|P) \ge R}
H(Q)+2D(Q\|P) -R \nonumber \\
=&
\max_{0\le s }  s {H}_{1+s}(A|P) -s R \nonumber \\
=&
\max_{0\le s \le 1}  s {H}_{1+s}(A|P) -s R .
\Label{9-9-4}
\end{align}
When 
$\frac{d (s{H}_{1+s}(A|P))}{ds}|_{s=1} 
> R $,
\begin{align}
&\min_{Q:H(Q)+D(Q\|P) \ge R}
H(Q)+2D(Q\|P) -R \nonumber \\
=&
{H}_{2}(A|P) - R \Label{9-9-5} \\
=&
\max_{0\le s \le 1}  s{H}_{1+s}(A|P) -s R .
\Label{2-2-8}
\end{align}
\end{lem}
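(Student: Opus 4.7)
The plan is to exploit two structural identities. First, a direct computation gives $H(Q)+D(Q\|P) = -\sum_{a}Q(a)\log P(a)$, so the constraint $H(Q)+D(Q\|P)\ge R$ is \emph{linear} in $Q$. Second,
\begin{align*}
H(Q)+2D(Q\|P) = \sum_{a}Q(a)\log Q(a) - 2\sum_{a}Q(a)\log P(a)
\end{align*}
is a Gibbs free energy with potential $V(a)=-2\log P(a)$. By the Gibbs variational principle its unconstrained minimum over the simplex is $-\log\sum_{a}e^{-V(a)}=-\log\sum_{a}P(a)^{2}=\tilde{H}_{2}(A|P)$, attained uniquely at the tilted distribution $Q_1(a)\propto P(a)^{2}$; at this point $H(Q_1)+D(Q_1\|P)=-E_{Q_1}[\log P]=\tilde{H}_{2}'(A|P)$.

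In Case 2 ($\tilde{H}_{2}'(A|P)>R$), $Q_1$ is strictly feasible, so the constrained minimum equals the unconstrained minimum $\tilde{H}_{2}(A|P)$, yielding (\ref{9-9-5}) after subtracting $R$. For (\ref{2-2-8}), concavity of $s\mapsto\tilde{H}_{1+s}(A|P)$ (a log-sum-exp identity in $s$) makes $\tilde{H}_{1+s}'(A|P)$ monotonically decreasing; since its value at $s=1$ exceeds $R$, the function $\tilde{H}_{1+s}(A|P)-sR$ is strictly increasing on $[0,1]$, so its maximum on this interval is at $s=1$ and equals $\tilde{H}_{2}(A|P)-R$.

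In Case 1 ($\tilde{H}_{2}'(A|P)\le R$), $Q_1$ is infeasible (or on the boundary), so by strict convexity of the objective the constrained minimum lies on the hyperplane $H(Q)+D(Q\|P)=R$; on this hyperplane the objective collapses to $D(Q\|P)+R$, reducing the problem to $\min\{D(Q\|P):-E_{Q}[\log P]=R\}$. The latter is a classical Cram\'{e}r/Gibbs KL-minimization under a linear moment constraint, whose value equals the Legendre transform of the log-MGF $\Lambda(\theta)=\log\sum_{a}P(a)^{1-\theta}=-\tilde{H}_{1-\theta}(A|P)$. Substituting $s=-\theta$ gives
\begin{align*}
\min\{D(Q\|P):-E_{Q}[\log P]=R\}=\sup_{s}\bigl[\tilde{H}_{1+s}(A|P)-sR\bigr].
\end{align*}
In Case 1, $\tilde{H}_{1+s}'(A|P)$ decreases from $H(A|P)=\tilde{H}_{1}'(A|P)$ at $s=0$ through $\tilde{H}_{2}'(A|P)\le R$ at $s=1$, so (in the natural regime $R\le H(A|P)$) the concave function $\tilde{H}_{1+s}(A|P)-sR$ attains its unrestricted supremum in $[0,1]$; this upgrades $\sup_{s}$ to $\max_{0\le s\le 1}$, completing (\ref{9-9-4}).

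The main technical point will be the rigorous application of the Gibbs/Cram\'{e}r variational identity for the constrained KL problem---in particular, attainability of the equality constraint over the simplex and careful handling of the edge cases $R=\tilde{H}_{2}'(A|P)$ and $R=H(A|P)$, where the optimizing tilt lies on the boundary of $[0,1]$. Once these are in place, the lemma reduces to the two structural identities above combined with a routine concavity-and-derivative analysis of $s\mapsto\tilde{H}_{1+s}(A|P)-sR$; the heart of the argument is simply recognizing that both sides of (\ref{9-9-4}) are Legendre transforms of the same convex function.
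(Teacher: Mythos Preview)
Your argument is correct and rests on the same core idea as the paper's proof: the tilted family $P_{1+s}(a)\propto P(a)^{1+s}$ minimizes $D(Q\|P)$ on each hyperplane $-E_Q[\log P]=R$, and $\tilde H_{1+s}'(A|P)=-E_{P_{1+s}}[\log P]$. The paper establishes this by hand via the Pythagorean identity $D(Q\|P)-D(P_{1+s}\|P)=D(Q\|P_{1+s})$ whenever $H(Q)+D(Q\|P)=H(P_{1+s})+D(P_{1+s}\|P)$, and then handles the inequality constraint by sweeping over level sets $R'\ge R$ and showing the objective only increases; you instead invoke the Cram\'er/Gibbs variational principle as a black box and dispose of the inequality constraint in one stroke by the strict-convexity observation that the constrained minimum must lie on the boundary hyperplane (Case~1) or coincide with the unconstrained Gibbs minimizer $Q_1\propto P^2$ (Case~2). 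Your organization is a bit cleaner---in particular, recognizing Case~2 as ``unconstrained minimizer is feasible'' avoids the paper's separate $R'$-analysis---but the underlying mechanism is identical.
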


\quad {\it Proof of Lemma \ref{Lem9-9-2}:}~
Due to (\ref{9-6-1}), (\ref{9-7-5}),
and the continuity of $Q \mapsto H(Q)$ and $ D(Q\|P^A)$,
we obtain
\begin{align}
& \lim_{n \to \infty} \frac{-1}{n}\log
2 |{\cal T}_n| \max_{Q\in {\cal T}_n^2[\lfloor e^{nR} \rfloor ]} e^{-n (D(Q\|P^A) + H(Q))} \nonumber \\
= & 
\lim_{n \to \infty} 
\min_{Q\in {\cal T}_n^2[\lfloor e^{nR} \rfloor ]} D(Q\|P^A) + H(Q) \nonumber \\
= &
\min_{Q: D(Q\|P^A) \ge R} D(Q\|P^A) + H(Q) \nonumber \\
\ge &
\min_{Q: D(Q\| P^A) \ge R} H(Q)+2D(Q\| P^A) -R \nonumber \\
\ge & 
\min_{Q:H(Q)+D(Q\| P^A) \ge R} H(Q)+2D(Q\| P^A) -R .
\Label{9-7-12}
\end{align}
From (\ref{9-7-14}),
\begin{align*}
K_n := \sum_{Q\in {\cal T}_n^1[\lfloor e^{nR} \rfloor]} \lfloor e^{nR} \rfloor (P^A)^n(T_n(Q)) e^{-n (D(Q\|P^A) + H(Q)) }
\end{align*}
satisfies that
\begin{align*}
& \max_{Q\in {\cal T}_n^1[\lfloor e^{nR} \rfloor]} 
\frac{1}{{\cal T}_n}e^{-n (2 D(Q\|P^A) + H(Q)-R)} \\
\le &
K_n \le 
{\cal T}_n
\max_{Q\in {\cal T}_n^1[\lfloor e^{nR} \rfloor]} 
e^{-n (2 D(Q\|P^A) + H(Q)-R)}.
\end{align*}
Due to (\ref{9-6-1}) and 
the continuity of $Q \mapsto H(Q)$ and $ D(Q\|P^A)$,
\begin{align}
&\lim_{n\to\infty} \frac{-1}{n}\log K_n \nonumber \\
=&
\min_{Q:H(Q)+D(Q\| P^A) \ge R}
H(Q)+2D(Q\| P^A) -R .
\Label{9-7-9}
\end{align}
As is shown in Lemma \ref{l9-9-1}, RHSs of (\ref{9-7-12}) and (\ref{9-7-9})
equal
$\max_{0\le s \le 1}  s {H}_{1+s}(A|P^{A}) -s R$.
Since $\max_{0\le s}  s {H}_{1+s}(A|P^{A}) -s R \ge
\max_{0\le s \le 1}  s {H}_{1+s}(A|P^{A}) -s R$,
(\ref{9-7-11}) implies that
\begin{align}
&\lim_{n \to \infty} \frac{-1}{n}\log 
(P^A)^n \{ a\in{\cal A}^n| (P^A)^n(a)\ge \frac{2}{e^{nR}}\} \nonumber \\
\ge &
\max_{0\le s \le 1}  s ({H}_{1+s}(A|P^{A}) - R ).
\Label{9-7-11-f}
\end{align}
Thus, applying (\ref{9-7-12}), (\ref{9-7-9}), and (\ref{9-7-11-f}) to the RHS of (\ref{9-9-1}),
and using Lemma \ref{l9-9-1},
we can choose 
a sequence $\{f_n\}$ such that
\begin{align}
& \liminf_{n\to\infty} \frac{-1}{n}
\log \min_{f_n} d_1(P^{f_n(A_n)}) \nonumber \\
\ge & \max_{0\le s \le 1}  s (H_{1+s}(A|P^{A}) - R) ,
\end{align}
which implies (\ref{9-7-10-b}).
\endproof

\section{Secret key generation without communication}\Label{s42}
\subsection{Application of Theorem \ref{thm1}}\Label{s421}
Next, we consider the secure key generation problem from
a common random number $A \in \cA$ which has been partially eavesdropped on by Eve.
For this problem, it is assumed that Alice and Bob share a common random number $A \in \cA$,
and Eve has another random number $E \in \cE$, which is correlated to the random number $A$. 
The task is to extract a common random number 
$f(A)$ from the random number $A \in \cA$, which is almost independent of 
Eve's random number $E \in \cE$.
Here, Alice and Bob are only allowed to apply the same function $f$ to the common random number $A \in \cA$.

Then, 
when the initial random variables $A$ and $E$ obey the distribution $P^{A,E}$,
Eve's distinguishability 
can be represented by the following value:
\begin{align*}
d_1(P^{f(A),E}|E):=
d_1(P^{f(A),E},P^{f(A)}_{\mix}\times P^{E}) ,
\end{align*}
where $P^{f(A)}_{\mix}\times P^{E}$
is the product distribution of both marginal distributions $P^{f(A)}_{\mix}$ and $P^{E}$, 
and $P^{f(A)}_{\mix}$ is the uniform distribution on $\{1, \ldots, M\}$.
While the half of this value directly gives the probability that Eve can distinguish Alice's information,
we call $d_1(P^{f(A),E}|E)$ Eve's distinguishability in the following.
This criterion was proposed by \cite{Can} and was used by \cite{Ren05}.
Since the half of this quantity $d_1(P^{f(A),E}|E)$
is closely related to universally composable security,
we adopt it as the secrecy criterion in this paper.
As another criterion, we sometimes treat
\begin{align*}
d_1'(P^{f(A),E}|E):=
d_1(P^{f(A),E},P^{f(A)}\times P^{E}) .
\end{align*}
Since  
$d_1(P^{f(A)}\times P^{E},P^{M}_{\mix}\times P^{E}) 
=d_1(P^{f(A)},P^{M}_{\mix}) \le d_1(P^{f(A),E},P^{M}_{\mix}\times P^{E})$,
we have
\begin{align*}
d_1'(P^{f(A),E}|E)\le 2 
d_1(P^{f(A),E}|E).
\end{align*}
Further, when $P^{f(A)}$ is the uniform distribution,
the above criteria coincide with each other.

Next, we consider an ensemble of universal$_2$ hash functions 
$\{f_{\bX}\}$.
Similar to (\ref{1-5-4}), the equation
\begin{align}
\rE_{\bX}
d_1(P^{f_{\bX}(A),E}|E)
=
d_1(P^{B,E,\bX}, P^{B}_{\mix}\times P^E \times P^{\bX}) 
\end{align}
holds,
where $B$ is the random variable $f_{\bf X}(A)$.
Hence, when the expectation $\rE_{\bX} d_1(P^{f_{\bf X}(A),E}|E)$
is sufficiently small,
the random variable $f_{\bf X}(A)$ 
is almost independent of the random variables $\bX$ and $E$.
So, the above value is suitable even when we randomly choose 
a hash function.

In order to evaluate the average performance,
we define the quantity
\begin{align*}
\phi(t|A|E|P^{A,E})
&:= \log 
\sum_e P^E(e) 
(\sum_a P^{A|E}(a|e)^{\frac{1}{1-t}})^{1-t} \\
&= \log 
\sum_{e} 
(\sum_a P^{A,E}(a,e)^{\frac{1}{1-t}})^{1-t} .
\end{align*}
Note that when Eve's random variable $E$ takes a continuous value in the set ${\cal E}$,
the relation (\ref{4-25-1}) holds by defining 
$\phi(t|A|E|P^{A,E})$ in the following way.
\begin{align*}
\phi(t|A|E|P^{A,E}):= \log 
\int_{{\cal E}} P^E(e) de  
(\sum_a P^{A|E}(a|e)^{\frac{1}{1-t}})^{1-t}.
\end{align*}
This definition does not depend on the choice of the measure on ${\cal E}$.

By using Theorem \ref{thm1}
and putting $t=\frac{s}{1+s}$, 
any universal$_2$ hash functions $\{f_{\bX}\}$
satisfies
the inequality:
\begin{align}
\rE_{\bX}
d_1(P^{f_{\bX}(A),E}|E)
&\le 3 M^\frac{s}{1+s} \rE_{e} (\sum_a P^{A|E}(a|e)^{1+s})^{\frac{1}{1+s}} \nonumber \\
&= 3 M^t e^{\phi(t|A|E|P^{A,E})}\Label{4-25-1}
\end{align}
for $0 \le t \le \frac{1}{2}$.
Therefore, there exists a function $f$ such that
\begin{align}
d_1(P^{f(A),E}|E)
&\le 3 M^\frac{s}{1+s} \rE_{e} (\sum_a P^{A|E}(a|e)^{1+s})^{\frac{1}{1+s}} \nonumber \\
&= 3 M^t e^{\phi(t|P^{A,E})}\Label{4-25-1-b}.
\end{align}

Next, we consider the case when our distribution $P^{A_n E_n}$ 
is given by the $n$-fold independent and identical distribution of 
$P^{AE}$, i.e, $(P^{A,E})^n$.
Ahlswede and Csisz\'{a}r \cite{AC93} showed that
the optimal generation rate
\begin{align*}
& G(P^{A,E}) \\
:=&
\sup_{\{(f_n,M_n)\}}
\left\{
\lim_{n\to\infty} \frac{\log M_n}{n}
\left|
\lim_{n\to\infty} d_1(P^{f_{n}(A_n),E_n}|E_n)=0 
\right. \right\}
\end{align*}
equals the conditional entropy $H(A|E)$.
That is, any achievable generation rate $R= \lim_{n\to\infty} \frac{\log M_n}{n}$
is no more than $H(A|E)$.
The quantity $d_1(P^{f_{n}(A_n),E_n}|E_n)$ goes to zero.
In order to treat the speed of this convergence,
we focus on the supremum of  
the {\it exponential rate of decrease (exponent)} for 
$d_1(P^{f_{n}(A_n),E_n}|E_n)$ for a given $R$
\begin{align*}
&e_1(P^{A,E}|R) \\
:=&
\sup_{\{(f_n,M_n)\}}
\Bigl\{
\lim_{n\to\infty} 
\frac{-1}{n}\log d_1(P^{f_{n}(A_n),E_n}|E_n)
\Bigr| \\
&\hspace{27ex} 
\lim_{n\to\infty} \frac{-1}{n} \log M_n
\le R
\Bigr\}.
\end{align*}
Since the relation
$\phi(t|A^n|E^n|(P^{A,E})^n)= n \phi(t|A|E|P^{A,E})$ holds,
the inequality (\ref{4-25-1-b}) implies that
\begin{align}
e_1(P^{A,E}|R) 
\ge 
-\phi(t|A|E|P^{A,E})-tR.
\end{align}
for $t \in [0,1/2]$.
That is, taking the maximum concerning $t \in [0,1/2]$,
we obtain 
\begin{align}
e_1(P^{A,E}|R) \ge 
e_\phi(A|E|P^{A,E}|R) ,
\Label{4-16-4}
\end{align}
where
\begin{align*}
e_\phi(A|E|P^{A,E}|R) 
&:= \max_{0 \le t \le \frac{1}{2}} 
-\phi(t|A|E|P^{A,E})-tR \\
& = \max_{0 \le s \le 1} 
-\phi(\frac{s}{1+s}|A|E|P^{A,E})-\frac{s}{1+s} R .
\end{align*}
Since 
$\left.\frac{d}{dt}\phi(t|P^{A,E})\right|_{t=0}=
\left.\frac{d (s {H}_{1+s}(A|E|P^{A,E})}{ds}  \right|_{s=0}=
-H(A|E)$,
the right hand sides of (\ref{4-16-4}) and (\ref{2-3-1})
are strictly greater than $1$ for $R<H(A|E)$.

\subsection{Comparison with the previous paper \cite{Hayashi2}}\Label{s42s}
Next, we show how better our bound is than that by the previous paper \cite{Hayashi2}.
The previous paper \cite{Hayashi2} shows the following 
in Section IIA:
there exists a sequence of functions $f_n:
{\cal A}^n \to \{1, \ldots, \lfloor e^{nR} \rfloor \}$
such that
\begin{align*}
& \lim_{n \to \infty}
\frac{-1}{n}\log 
D (P^{f_n (A_n),E_n} \| P^{f_n(A_n)}_{\mix}\times P^{E_n }) \\
\ge &
\max_{0 \le s \le 1}
s{H}_{1+s}(A|E|P^{A,E})-s R,
\end{align*}
where 
we define the function
\begin{align*}
s {H}_{1+s}(A|E|P^{A,E})
&:=-\log \sum_{a,e} P^E(e)P^{A|E}(a|e)^{1+s}  \\
&=-\log \sum_{a,e} P^{A,E}(a,e)^{1+s} P^E(e)^{-s}
\end{align*}
for $s \in [0,1]$.
Hence, applying Pinsker's inequality (\ref{1-5-3}),
we obtain 
\begin{align}
e_1(P^{A,E}|R) 
\ge & \lim_{n \to \infty}
\frac{-1}{n}\log d_1(P^{f_{n}(A_n),E_n}|E_n) \nonumber  \\
\ge &
\tilde{e}_H(A|E|P^{A,E}|R)
\Label{2-3-1}
\end{align}
where
\begin{align*}
\tilde{e}_H(A|E|P^{A,E}|R)
:=&
\max_{0 \le s \le 1}
\frac{s {H}_{1+s}(A|E|P^{A,E})-s R}{2} \\
=& \max_{0 \le t \le \frac{1}{2}}
\frac{t{H}_{\frac{1}{1-t}}(A|E|P^{A,E}) -t R}{2-2t}
\end{align*}
with $s=\frac{t}{1-t}$.
Concerning the comparison of both bounds,
we prepare the following lemma.

\begin{lem}\Label{L1-6-1}
The inequality
\begin{align}
-\frac{s}{1+s}H_{1+s}(A|E|P^{A,E})
\ge
\phi(\frac{s}{1+s}|A|E|P^{A,E})
\end{align}
holds for $s\in (0,\infty)$.
Equality holds if and only if
the R\'{e}nyi entropy $H_{1+s}(A|P^{A|E=e})$
does not depend on the choice $e$ at the support of $P^E$.
\end{lem}
\begin{proof}
Applying Jensen's inequality to the concave function $x \mapsto x^{\frac{1}{1+s}} $,
we have
\begin{align*}
& e^{-\frac{s {H}_{1+s}(A|E|P^{A,E})}{1+s}}
= 
(\sum_{e} P^E(e) \sum_{a}P^{A|E}(a|e)^{1+s})^{\frac{1}{1+s}} \\
\ge &
\sum_{e} P^E(e) (\sum_{a}P^{A|E}(a|e)^{1+s})^{\frac{1}{1+s}} 
=  
e^{\phi(\frac{s}{1+s}|A|E|P^{A,E})}.
\end{align*}
Thus, the equality condition is 
that the value $\sum_{a}P^{A|E}(a|e)^{1+s}$
does not depend on the choice $e$ at the support of $P^E$.
Hence, we obtain the desired argument.
\end{proof}

In order to compare the two bounds
$e_\phi(A|E|P^{A,E}|R)$
and $\tilde{e}_H(A|E|P^{A,E}|R)$,
we introduce the following value:
\begin{align*}
e_H(A|E|P^{A,E}|R)
:=& \max_{0 \le s \le 1}
\frac{s {H}_{1+s}(A|E|P^{A,E})-s R}{1+s} \\
=& \max_{0 \le t \le \frac{1}{2}}
t{H}_{\frac{1}{1-t}}(A|E|P^{A,E}) -t R
\end{align*}

Then, we obtain the following lemma.
\begin{lem}\Label{L1-6-2}
\begin{align}
e_\phi(A|E|P^{A,E}|R)
\ge
e_H(A|E|P^{A,E}|R)
\ge
\tilde{e}_H(A|E|P^{A,E}|R)
\Label{2-4-2}
\end{align}
for $R<H(A|E)$.
Equality in the first inequality holds if and only if
the R\'{e}nyi entropy $H_{1+s_0}(A|P^{A|E=e})$
does not depend on the choice $e$ at the support of $P^E$
for 
$s_0:=
\argmax_{0 \le s \le 1} 
-\phi(\frac{s}{1+s}|A|E|P^{A,E})-\frac{s}{1+s} R$.
Equality in the second inequality holds if and only if
$\frac{s {H}_{2}(A|E|P^{A,E})- R}{2} 
= \max_{0 \le s \le 1}
\frac{s {H}_{1+s}(A|E|P^{A,E})-s R}{1+s} $.
\end{lem}
Therefore, 
our exponent $e_\phi(A|E|P^{A,E}|R)$ is strictly better than
the exponent $\tilde{e}_H(A|E|P^{A,E}|R)$ 
by \cite[Section IIA]{Hayashi2}
except for the case satisfying the following two conditions:
(i)
$-\phi(\frac{1}{2}|A|E|P^{A,E})-\frac{1}{2} R 
= \max_{0 \le s \le 1} 
-\phi(\frac{s}{1+s}|A|E|P^{A,E})-\frac{s}{1+s} R $.
(ii)
$H_{2}(A|P^{A|E=e})$
does not depend on the choice $e$ at the support of $P^E$.

For example, we consider the following case:
${\cal A}$ equals ${\cal E}$,
the set ${\cal A}$ has a module structure,
(i.e., ${\cal A}$ is an Abelian group)
and 
the conditional distribution $P^{A|E}(a|e)$ has the form $P^A(a-e)$.
Then, 
the equality condition for the first inequality holds.
Since
\begin{align*}
&e^{\phi(\frac{s}{1+s}|A|E|P^{A,E})}
=
\sum_{e} P^E(e) (\sum_{a}P^{A|E}(a|e)^{1+s})^{\frac{1}{1+s}} \\
=&
\sum_{e} P^E(e) 
e^{-\frac{s {H}_{1+s}(A|P^{A})}{1+s}}
=
e^{-\frac{s {H}_{1+s}(A|P^{A})}{1+s}}.
\end{align*}
and
\begin{align*}
&e^{-sH_{1+s}(A|E|P^{A,E})}
=
\sum_{e} P^E(e) \sum_{a}P^{A|E}(a|e)^{1+s} \\
= &
\sum_{e} P^E(e) \sum_{a}P^{A}(a-e)^{1+s} \\
=&
\sum_{e} P^E(e) 
e^{-\frac{s {H}_{1+s}(A|P^{A})}{1+s}}
=
e^{-\frac{s {H}_{1+s}(A|P^{A})}{1+s}},
\end{align*}
bounds $e_\phi(A|E|P^{A,E}|R)$ 
and $\tilde{e}_H(A|E|P^{A,E}|R)$ 
can be simplified to 
\begin{align*}
e_{\phi}(A|E|P^{A,E}|R)
&=
e_{H}(A|E|P^{A,E}|R)
=
e_{H}(A|P^{A}|R) \\
\tilde{e}_{H}(A|E|P^{A,E}|R)
&=
\tilde{e}_{H}(A|P^{A}|R),
\end{align*}
where
\begin{align*}
e_{H}(A|P^{A}|R)
:=&
\max_{0 \le s \le 1}
\frac{s {H}_{1+s}(A|P^{A})-s R}{1+s}  \\
=& \max_{0 \le t \le 1/2} 
t {H}_{\frac{1}{1-t}}(A|P^{A})-t R \nonumber \\
\tilde{e}_{H}(A|P^{A}|R)
:=&
\max_{0 \le s \le 1}
\frac{s {H}_{1+s}(A|P^{A})-s R}{2} \nonumber \\
=& \max_{0 \le t \le 1/2} 
\frac{ t {H}_{\frac{1}{1-t}}(A|P^{A})-t R }{2-2t}.
\end{align*}
In particular, 
both exponents are numerically plotted in Fig. \ref{f3}
when ${\cal A}=\{0,1\}$, and $P^A(0)=a$, $P^A(1)=1-a$.

\begin{proof}
The first inequality and its equality condition follow 
from Lemma \ref{L1-6-1} and 
the definitions of $e_\phi(P^{A,E}|R)$ and $e_H(P^{A,E}|R)$.
The second inequality follows from
the inequality 
$\frac{1}{2} \le \frac{1}{1+s} $ for $s \in [0,1]$.
Since the equality holds only when $s=1$,
we obtain the equality condition for the second inequality.
\end{proof}

\begin{figure}[htbp]
\begin{center}
\scalebox{1.0}{\includegraphics[scale=0.8]{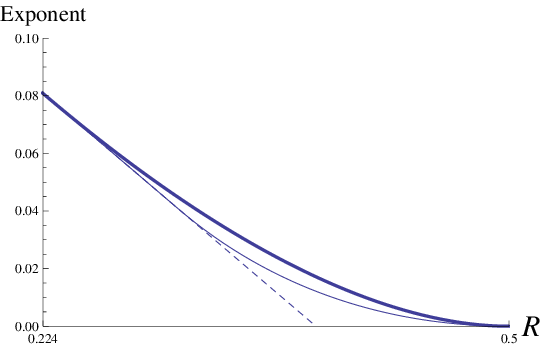}}
\end{center}
\caption{
Lower bounds of
$e_1(P^{AE}|R)$.
Thick line: 
$e_H(A|P^A|R)
$ (The present paper),
Normal line: $\tilde{e}_H(A|P^A|R)
$ by \cite{Hayashi2}),
Dashed line: $\frac{H_{2}(A|P^A)- s R}{2}$ (direct application of (\ref{5-14-6}) without smoothing).
Here,
$P^A$ is chosen to be the binary distribution 
$P^A(0)=\alpha$, $P^A(1)=1-\alpha$ with $\alpha=0.200$.
Then
$h(\alpha)=H(A)=0.500$, 
and $2\frac{d(s H_{1+s}(A))}{ds}|_{s=1}-H_2(A)=0.224$.}
\Label{f3}
\end{figure}%

\section{The wire-tap channel in a general framework}\Label{s2}
Next, we consider the wire-tap channel model, in which the eavesdropper (wire-tapper) Eve and the authorized receiver Bob receive the information from the authorized sender Alice.
In this case, in order for Eve to have less information, Alice chooses a suitable encoding.
This problem is formulated as follows.
Let $\cX$, $\cY$ and $\cZ$ be the alphabets of Alice, Bob, and Eve.
Then, the main channel from Alice to Bob is described by $W^B:x \mapsto W^B_x$,
and the wire-tapper channel from Alice to Eve is described by $W^E:x \mapsto W^E_x$.
That is, 
$W^B_x$ is the output distribution on Bob's side with Alice's input $x$,
and $W^E_x$ is the output distribution on Eve's side with Alice's input $x$.
In this setting, 
in order to send a secret message in $\{1, \ldots, M\}$
subject to the uniform distribution,
Alice chooses $M$ 
distributions $Q_1, \ldots, Q_M$ on $\cX$,
and she generates $x\in \cX$ subject to $Q_i$
when she wants to send the message $i \in \{1, \ldots, M\}$.
Bob prepares $M$ disjoint subsets
$\cD_1,\ldots, \cD_M$ of $\cY$ and 
judges that a message is $i$ if $y$ belongs to $\cD_i$.
Therefore, the triplet $(M,\{Q_1, \ldots, Q_M\},
\{\cD_1,\ldots, \cD_M\})$ is called a
code, and is described by $\Phi$.
Its performance is given by the following three quantities.
The first is the size $M$, which is denoted by $|\Phi|$.
The second is the average 
error probability $\epsilon_B(\Phi)$:
\begin{align*}
\epsilon_B(\Phi)\defeq
\frac{1}{M} \sum_{i=1}^M  W_{Q_i}^B (\cD_i^c),
\end{align*}
and the third is Eve's distinguishability
$d_1(\Phi|E)$:
\begin{align*} 
d_1(\Phi|E):= &
d_1(
W^E_{\Phi}\times P^M_{\mix}, W^E[\Phi]) \\
W^E_{\Phi}(e)  :=& \sum_i \frac{1}{M} W_{Q_i}^E(e)
,\quad
W^E[\Phi](i,e)  := \frac{1}{M} W_{Q_i}^E(e).
\end{align*}
The quantity $d_1(\Phi|E)$
gives an upper bound for the probability that 
Eve can succeed in distinguishing whether Alice's information belongs to 
a given subset.
So, the value can be regarded as Eve's distinguishability.
In order to calculate these values, 
we introduce the following quantity.
\begin{align*}
\phi(t|W,p) &:= \log 
\sum_y \left( \sum_x p(x)
(W_x(y))^{1/(1-t)}\right)^{1-t} .
\end{align*}
When the random variable $Y$ takes a continuous value in the set ${\cal Y}$
while $X$ takes discrete value,
the above definition can be changed to 
\begin{align*}
\phi(t|W,p) &:= \log 
\int_{{\cal Y}} \left( \sum_x p(x)
(W_x(y))^{1/(1-t)}\right)^{1-t} dy .
\end{align*}
This definition does not depend on the choice of the measure on ${\cal Y}$.
That is, when $\tilde{W}_x(y)f(y)=W_x(y)$ for a positive function $f$,
\begin{align*}
\phi(t|W,p) &= \log 
\int_{{\cal Y}} \left( \sum_x p(x)
(\tilde{W}_x(y))^{1/(1-t)}\right)^{1-t} f(y) dy .
\end{align*}

As is shown as Lemma 1 of \cite{Hayashi2},
$\phi(t|W,p)$ satisfies the following lemma.
\begin{lem}\Label{l1}
The function $p \mapsto 
e^{\phi(t|W,p)}$ 
is convex for $t\in [-1,0]$, and is concave for $t\in [0,1]$.
\end{lem}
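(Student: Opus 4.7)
The plan is to exploit the fact that, for each fixed output symbol $y$, the quantity inside the outer power in the definition of $\phi(t|W,p)$ is affine in $p$, and then invoke the standard fact that composing a convex/concave function with an affine map preserves convexity/concavity, and that sums preserve convexity/concavity.

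More concretely, first I would write
\begin{align*}
e^{\phi(t|W,p)}
= \sum_{y} h_y(p)^{1-t},
\qquad
h_y(p) := \sum_{x} p(x)\, W_x(y)^{1/(1-t)}.
\end{align*}
For each $y$, the map $p \mapsto h_y(p)$ is a nonnegative affine function of $p$, because the coefficients $W_x(y)^{1/(1-t)}\ge 0$ do not depend on $p$. Hence $h_y(p)$ takes values in $[0,\infty)$, and the scalar function $u \mapsto u^{1-t}$ is evaluated only on $[0,\infty)$, where it is well-defined.

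Now I would split into two cases according to the exponent $1-t$. If $t\in[0,1]$, then $1-t\in[0,1]$ and the map $u\mapsto u^{1-t}$ is concave on $[0,\infty)$; composing with the affine map $p\mapsto h_y(p)$ keeps this concave, so each summand $p\mapsto h_y(p)^{1-t}$ is concave, and a sum of concave functions is concave. If $t\in[-1,0]$, then $1-t\in[1,2]$ and $u\mapsto u^{1-t}$ is convex on $[0,\infty)$; the same composition-with-affine argument makes each summand convex, and the sum is convex. This yields the two claimed statements.

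There is essentially no serious obstacle: the only point to be careful about is nonnegativity of $h_y(p)$ so that fractional powers are well-defined, and the treatment of the continuous case in which sums over $y$ are replaced by integrals over $\cY$; the same pointwise (now $y$-wise) argument together with the fact that convexity/concavity is preserved under integration of a parametrized family goes through verbatim.
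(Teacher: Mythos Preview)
Your argument is correct. The paper does not give its own proof of this lemma; it simply refers to Lemma~1 of Hayashi~\cite{Hayashi2}, and your elementary argument (affinity of each inner sum $h_y(p)$ in $p$, composed with the convex/concave power $u\mapsto u^{1-t}$, then summed over $y$) is exactly the standard way to establish it.
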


Now, using the function $\phi(t)$,
we make a code for the wire-tap channel based on the random coding method.
For this purpose, we make a protocol to share a random number.
First, we generate the random code $\Phi(\bY)$ with size $LM$,
which is described as $\Phi(\bY)(a)=Y_a$ for $a=1, \ldots, LM$
by using the $LM$ independent and identical random
variables $\bY=(Y_1,\ldots,Y_{ML})$ subject to the distribution $p$ on $\cX$.
Gallager \cite{Gal} 
showed that the ensemble expectation of the average error probability 
concerning decoding the input message $A$ is less than
$(ML)^{t}e^{\phi(-t|W^B,p)}$ for $0 \le t \le 1$
when Bob applies the maximum likelihood decoder $\cD'(\bY)$ of the code $\Phi(\bY)$.
After sending the random variable $A$ taking values in the set with the cardinality $ML$, 
Alice and Bob apply the above
universal$_2$ hash functions $f_{\bX}$ to the random variable $A$ and generate another piece of data of size $M$.
Here, we assume that the ensemble $\{ f_{\bX} \}$ satisfies 
Condition \ref{C12}.
Then, Alice and Bob 
share the random variable $f_{\bX}(A)$ with size $M$.
This protocol is denoted by $\Phi(\bX,\bY)'$. 

Let $E$ be the random variable of the output of Eve's channel $W^E$. 
When $p$ is the uniform distribution on the set ${\cal C}:=\{1, \ldots, ML\}$
and the joint distribution $P^{C,E}$ is given by 
$P^{C,E}(c,e):=p(c) W_c^E(e)$,
the equations
\begin{align}
e^{\phi(t|P^{C,E})}
= & \frac{1}{M^t L^t}
\sum_e \left( \sum_a p(c) (W_c^E(e))^{\frac{1}{1-t}}\right)^{1-t} 
\nonumber \\
= & \frac{e^{\phi(t|W^E,p)}}{M^t L^t}
\Label{12-26-1}.
\end{align}
hold.

For a given code $\Phi(\bY)$,
we apply the inequality (\ref{4-25-1}) to Eve's distinguishability.
Then,
\begin{align}
\rE_{\bX|\bY} d_1(\Phi(\bX,\bY)'|E) 
 \le 
3 \frac{e^{\phi(t|W^E,p_{\mix,\Phi(\bY)})}}{L^t}
\Label{8-5-1}
\end{align}
for $0 \le t \le \frac{1}{2}$.
The concavity of $e^{\phi(t|W^E,p)}$ (Lemma \ref{l1}) guarantees that
\begin{align*}
\rE_{\bX,\bY} d_1(\Phi(\bX,\bY)'|E) 
 \le &
3 \rE_{\bY}\frac{e^{\phi(t|W^E,p_{\mix,\Phi(\bY)})}}{L^t} \\
 \le &
3 \frac{e^{\phi(t|W^E,p)}}{L^t}
\end{align*}
for $0 \le t \le \frac{1}{2}$.

Now, we make a code for the wire-tap channel by modifying the above protocol $\Phi(\bX,\bY)'$.
First, we choose the distribution $Q_i$ to be the uniform distribution on $f_{\bX}^{-1}\{i\}$.
When Alice wants to send the secret message $i$,
before sending the random variable $A$, 
Alice generates the random number $A$ subject to the distribution $Q_i$.
Alice sends the random variable $A$.
Bob recovers the random variable $A$
by using the maximum likelihood decoder $\cD'(\bY)$, 
and applies the function $f_{\bX}$.
Then, Bob decodes Alice's message $i$, and this code for wire-tap channel $W^B,W^E$
is denoted by $\Phi(\bX,\bY)$.
Since the ensemble $\{ f_{\bX} \}$ satisfies Condition \ref{C12}
and the secret message $i$ obeys the uniform distribution
on $\{1,\ldots, M\}$,
this protocol $\Phi(\bX,\bY)$ has the same performance as the above protocol $\Phi(\bX,\bY)'$.

Finally, 
we consider what code is derived from the above random coding discussion.
Using the Markov inequality, we obtain
\begin{align*}
\rP_{\bX,\bY} 
\{ \epsilon_B(\Phi(\bX,\bY)) \le 3 \rE_{\bX,\bY} \epsilon_B(\Phi(\bX,\bY)) \}
&\ge \frac{2}{3} \\
\rP_{\bX,\bY} 
\{ 
d_1(\Phi(\bX,\bY)|E)
\le 3 \rE_{\bX,\bY} d_1(\Phi(\bX,\bY)|E)
\}
&\ge \frac{2}{3} .
\end{align*}
Therefore, the existence of a good code is guaranteed in the following way.
That is, we give the concrete performance of a code 
whose existence is shown in the above random coding method.

\begin{thm}\Label{3-6}
There exists a code $\Phi$ for any
integers $L,M$,
and any probability distribution $p$ on $\cX$
such that $|\Phi| =M $ and
\begin{align*}
\epsilon_B(\Phi) 
\le & 3 \min_{0\le t\le 1}(ML)^{t}e^{\phi(-t|W^B,p)},
\\
d_1(\Phi|E)  
\le & 9
\min_{0 \le t \le \frac{1}{2}}
\frac{e^{\phi(t|W^E,p)}}{L^t }.
\end{align*}
\end{thm}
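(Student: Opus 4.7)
The plan is a random-coding argument that uses Theorem \ref{thm1} to control Eve's distinguishability. I would draw $ML$ codewords $X_{i,j}$ for $(i,j)\in\{1,\ldots,M\}\times\{1,\ldots,L\}$ i.i.d.\ from $p$, let $Q_i$ be uniform on $\{X_{i,1},\ldots,X_{i,L}\}$, and use a Gallager-type maximum-likelihood decoder $\{\cD_i\}$ for Bob. Alice sends message $i$ by picking $j$ uniformly in $\{1,\ldots,L\}$ and transmitting $X_{i,j}$; the index $j$ supplies the $L$-fold randomization that masks the message from Eve.

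For Bob, I would invoke the standard Gallager random-coding bound on the $ML$-codeword collection, which gives $\rE[\epsilon_B(\Phi)] \le (ML)^{t}\, e^{\phi(-t|W^B,p)}$ for every $t\in[0,1]$.

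For Eve, I would reinterpret the scheme as follows. Let $A$ be uniform on $\{1,\ldots,ML\}$ and let $Y^E$ be Eve's channel output under transmission of $X_A$. By exchangeability of the i.i.d.\ codewords, averaging over the random codebook with the fixed labeling $a\mapsto \lceil a/L\rceil$ is equivalent to fixing a codebook and averaging a uniformly random balanced surjection $f:\{1,\ldots,ML\}\to\{1,\ldots,M\}$ with each fiber of exact size $L$; such an ensemble satisfies Conditions \ref{C1} and \ref{C12}. The conditional form of Theorem \ref{thm1}, namely inequality (\ref{4-25-1}), therefore yields
\begin{align*}
\rE[d_1(\Phi|E)] \le 3 M^{t}\, \rE_{{\rm cbk}}\bigl[e^{\phi(t|P^{A,Y^E})}\bigr],\qquad t\in[0,\tfrac12].
\end{align*}
A direct computation shows $e^{\phi(t|P^{A,Y^E})} = (ML)^{-t}\, e^{\phi(t|W^E,\hat p)}$, where $\hat p := \frac{1}{ML}\sum_{a=1}^{ML} \delta_{X_a}$ is the empirical distribution of the codebook. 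Since $p\mapsto e^{\phi(t|W^E,p)}$ is concave for $t\in[0,1]$ by Lemma \ref{l1}, Jensen's inequality applied to the i.i.d.\ codebook gives $\rE_{{\rm cbk}}[e^{\phi(t|W^E,\hat p)}] \le e^{\phi(t|W^E,p)}$, whence $\rE[d_1(\Phi|E)] \le 3 L^{-t}\, e^{\phi(t|W^E,p)}$.

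Finally, a standard expurgation (Markov's inequality together with a union bound, discarding codebooks on which either of the two averaged quantities exceeds twice its mean) produces a single deterministic code meeting both bounds with an extra factor of two, yielding exactly the coefficients $2$ and $6$ in the theorem. I expect the main obstacle to be the Eve step: matching Theorem \ref{thm1}'s random-hash picture to the random-codebook picture via the exchangeability argument above, and then turning the $M^{t}$ factor that Theorem \ref{thm1} produces into the claimed $L^{-t}$ decay through the identity $\phi(t|P^{A,Y^E}) = \phi(t|W^E,\hat p) - t\log(ML)$ combined with Lemma \ref{l1}. By comparison, the Bob side is a routine Gallager calculation.
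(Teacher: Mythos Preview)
Your proposal is correct and follows essentially the same route as the paper: random i.i.d.\ codebook of size $ML$, Gallager's bound for Bob, inequality~(\ref{4-25-1}) for Eve together with the identity $e^{\phi(t|P^{A,E})}=(ML)^{-t}e^{\phi(t|W^E,\hat p)}$, the concavity of Lemma~\ref{l1} to replace $\hat p$ by $p$, and a Markov/union-bound expurgation for the factors $2$ and $6$. The only cosmetic difference is that the paper introduces an \emph{independent} universal$_2$ hash $f_{\bX}$ (satisfying Conditions~\ref{C1} and~\ref{C12}) on top of the random codebook $\bY$ and averages over both, whereas you keep the block partition fixed and use exchangeability of the i.i.d.\ codewords to manufacture the same hash-average; since a uniformly random balanced surjection has collision probability $(L-1)/(ML-1)\le 1/M$, your ensemble is indeed universal$_2$ and the argument goes through.
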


In the $n$-fold discrete memoryless channels $W^{B_n}$ and $W^{E_n}$ 
of the channels $W^B$ and $W^E$,
the additive equation
$\phi(t|W^{B_n},p)= n \phi(t|W^B,p)$ 
holds.
Thus, there exists a code $\Phi_n$ for any
integers $L_n,M_n$,
and any probability distribution $p$ on $\cX$
such that $|\Phi_n| =M_n $ and
\begin{align*}
\epsilon_B(\Phi)  
\le & 3
\min_{0\le t\le 1}
(M_n L_n)^{t}e^{n \phi(-t|W^B,p)} ,\\
d_1(\Phi_n|E)  
\le & 9
\min_{0 \le t \le \frac{1}{2}}
\frac{e^{n \phi(t|W^E,p)}}{L_n^t}.
\end{align*}
Since 
$\lim_{t \to 0} \frac{\phi(t|W^{E},p)}{t}= I(p:W^E)$,
the rate $\max_p I(p:W^B)-I(p:W^E)$ can be asymptotically 
attained.
Therefore,
when the sacrifice information rate is $R$, i.e., $L_n\cong e^{nR}$,
the exponential rate of decrease for Eve's distinguishability is 
greater than
\begin{align*}
e_{\phi}(R|W^E,p):=
\max_{0 \le t \le 1/2} t R -\phi(t|W^E,p).
\end{align*}

\section{Comparison with existing bounds}\Label{s6-1}
In Subsection \ref{s6-1-1}, 
we compare our exponent 
$e_{\phi}(R|W^E,p)$ with those derived by \cite{Hayashi,Hayashi2}
in the general setting.
In Subsections \ref{s6-1-2} and \ref{s6-1-3},
using discussion in Subsection \ref{s42s}, we treat this comparison 
in special cases more deeply.

\subsection{General case}\label{s6-1-1}
Now, we compare the lower obtained bound
$e_{\phi}(R|W^E,p)$ for 
the exponential rate of decrease for Eve's distinguishability
with existing lower bounds \cite{Hayashi,Hayashi2}.
Using the quantity
\begin{align}
\psi(t|W,p) &:= \log 
\sum_y \left( \sum_x p(x)
(W_x(y))^{1+t}\right)W_p(y)^{-t}\\
W_p(y) &:= \sum_x p(x)W_x(y),\nonumber
\end{align}
the previous paper \cite{Hayashi} derived
the following lower bound of this exponential rate of decrease:
\begin{align}
e_{\psi}(R|W^E,p)
:=&
\max_{0\le s\le1} \frac{s R-\psi(s|W^E,p)}{1+s} \nonumber \\
=&\max_{0 \le t \le 1/2} t R-(1-t)\psi(\frac{t}{1-t}|W^E,p).
\end{align}
The other previous paper \cite{Hayashi2} also derived
the following lower bound:
\begin{align}
\max_{0\le s\le1} s R-\psi(s|W^E,p)
\Label{1-5-5}
\end{align}
for the exponential rate of decrease for the mutual information.
By applying a discussion similar to Subsection \ref{s42s}
and Pinsker's inequality (\ref{1-5-4}),
the bound (\ref{1-5-5})
yields the bound
\begin{align}
\tilde{e}_{\psi}(R|W^E,p)
:=\max_{0\le s\le1} \frac{s R-\psi(s|W^E,p)}{2},
\Label{1-6-18}
\end{align}
which is smaller than 
the lower bound $e_{\psi}(R|W^E,p)$ because $\frac{1}{2} \le \frac{1}{1+s}$ for $0 \le s \le 1$.
Hence, in order to show the superiority of our bound $e_{\phi}(R|W^E,p)$,
it is sufficient to show the superiority over the bound 
$e_{\psi}(R|W^E,p)$.

In the following, we compare the two bounds 
$e_{\phi}(R|W^E,p)$ and $e_{\psi}(R|W^E,p)$.
For this purpose, we treat
$e^{\phi(t|W^E,p)}$
and $e^{(1-t)\psi(\frac{t}{1-t}|W^E,p)}$ for $0 \le t \le \frac{1}{2}$.
The reverse H\"{o}lder inequality \cite{RH} for the measurable space $({\cal X},p)$ is 
\begin{align*}
& \sum_{x\in {\cal X}} p(x) | X(x)Y(x)| \\
\ge &
(\sum_{x\in {\cal X}} p(x) |X(x)|^{\frac{1}{1+s}})^{1+s}
(\sum_{x\in {\cal X}} p(x) |Y(x)|^{-\frac{1}{s}})^{-s}
\end{align*}
for $s \ge 0$.
Using this inequality, we obtain
\begin{align*}
& \sum_y 
\left[
\sum_x p(x)
(W_x(y))^{1+s} \right]
W_p(y)^{-s} \\
\ge & 
\left(
 \sum_y 
\left[
\sum_x p(x)
(W_x(y))^{1+s} \right]^{\frac{1}{1+s}}
\right)^{1+s}
\cdot 
\left(
\sum_y 
W_p(y)^{-s\cdot -\frac{1}{s}}
\right)^{-s} \\
= &
\left(
 \sum_y 
\left[
\sum_x p(x)
(W_x(y))^{1+s} \right]^{\frac{1}{1+s}}
\right)^{1+s}.
\end{align*}
Substituting $s=\frac{t}{1-t}$,
we obtain
\begin{align*}
& \sum_y 
\left[
\sum_x p(x)
(W_x(y))^{\frac{1}{1-t}} \right]
W_p(y)^{\frac{-t}{1-t}} \\
\ge &
\left(
 \sum_y 
\left[
\sum_x p(x)
(W_x(y))^{\frac{1}{1-t}} \right]^{1-t}
\right)^{\frac{1}{1-t}},
\end{align*}
which implies
\begin{align*}
& e^{(1-t)\psi(\frac{t}{1-t}|W^E,p)} \\
= &
\left(
\sum_y 
\left[
\sum_x p(x)
(W_x(y))^{\frac{1}{1-t}} \right]
W_p(y)^{\frac{-t}{1-t}}
\right)^{1-t} \\
\ge &
\sum_y 
\left[
\sum_x p(x)
(W_x(y))^{\frac{1}{1-t}} \right]^{1-t}
=
e^{\phi(t|W^E,p)}.
\end{align*}
Thus, our bound 
$e_{\phi}(R|W^E,p)$ for the exponential rate of decrease is 
better than the existing bound 
$e_{\psi}(R|W^E,p)$ \cite{Hayashi}.

\begin{example}
Assume that ${\cal X}={\cal E}=\{0,1\}$.
We consider the following channel.
\begin{align*}
W_0(0)=\alpha, ~
W_0(1)=1-\alpha,~
W_1(0)=1-9\alpha, ~
W_1(1)=9\alpha.
\end{align*} 
When $p(0)=1/2, p(1)=1/2$,
\begin{align*}
I(p,W)= &h(1/2-5\alpha)-\frac{(h(\alpha)+h(9\alpha)}{2} \\
\psi(t|p,W)= &\log \Biggl( (\frac{\alpha^{1 + t} + (1 - 9\alpha)^{1 + t}}{2} (\frac{1}{2} - 5\alpha)^{-t}\\
   & + (\frac{(9\alpha)^{1 + t} + (1 - \alpha)^{1 + t}}{2}(1/2 + 5\alpha)^{-t} )\Biggr) \\
\phi(t|p,W)=&
 \log \Biggl( (\frac{\alpha^{1/(1 - t)} + (1 - 9\alpha)^{1/(1 - t)}}{2})^{1-t} \\
& + ( \frac{ (9\alpha)^{1/(1 - t)} + (1 - \alpha)^{1/(1 - t)}}{2})^{1 - t}\Biggr).
\end{align*} 
Then, the three bounds
$e_{\phi}(R|W,p)$, $e_{\psi}(R|W,p)$, and 
$\tilde{e}_{\psi}(R|W,p)$
with $\alpha=0.05$
are numerically compared as in Fig. \ref{f4}.

\begin{figure}[htbp]
\begin{center}
\scalebox{0.9}{\includegraphics[scale=1]{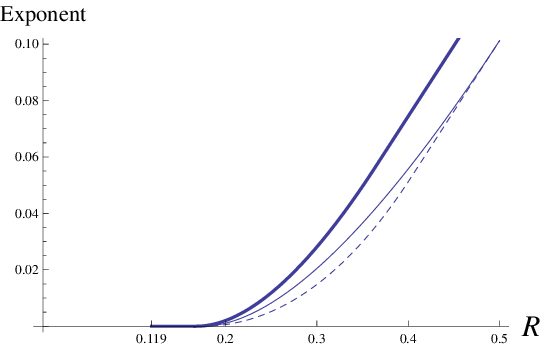}}
\end{center}
\caption{
Lower bounds of exponent.
Thick line: $e_{\phi}(R|W,p)$ (The present paper),
Normal line: $e_{\psi}(R|W,p)$ \cite{Hayashi},
Dashed line: $\tilde{e}_{\psi}(R|W,p)$ \cite{Hayashi2}.
Here, 
$\alpha$ is chosen to be $0.0500$.
Then, $I(p,W)=0.119$.
}
\Label{f4}
\end{figure}%
\end{example}

\subsection{Additive case}\label{s6-1-2}
Next, we consider a more specific case.
When ${\cal X}={\cal Z}$ and ${\cal X}$ is a module and $W_{x}(z)=W_{0}(z-x)
=P^X(z-x)$, 
the channel $W$ is called {\it additive}.

Since
\begin{align}
& e^{(1-t)\psi(\frac{t}{1-t}|W^E,p_{\mix})}
=e^{\phi(t|W^E,p_{\mix})}\nonumber \\
=& |{\cal X}|^{t} e^{- t{H}_{\frac{1}{1-t}}(X|P^X)}\Label{12-26-2},
\end{align}
any additive channel $W^E$ satisfies 
\begin{align}
& e_{\psi}(R|W^E,p_{\mix})= e_{\phi}(R|W^E,p_{\mix}) \nonumber \\
=& \max_{0 \le t \le \frac{1}{2}}
t (R-\log |{\cal X}|)+t H_{\frac{1}{1-t}}(X|P^X)) \nonumber \\
=& e_H(X|P^X| \log |{\cal X}|-R) \Label{1-5-7}
\end{align}
and 
\begin{align*}
& \tilde{e}_{\psi}(R|W^E,p_{\mix}) 
= \max_{0 \le t \le \frac{1}{2}}
\frac{t(R-\log |{\cal X}|) +t {H}_{\frac{1}{1-t}}(X|P^X)}{2-2t}  \\ 
=& \tilde{e}_H(X|P^X| \log |{\cal X}|-R) 
\end{align*}
for the uniform distribution $p_{\mix}$ on ${\cal X}$.

Hence, our bound $e_{\phi}(R|W^E,p_{\mix})$ is the same as
the previous bound $e_{\psi}(R|W^E,p_{\mix})$.
However,
since $\frac{1}{2-2t} < 1$ for $t \in [0,1/2)$,
our bound 
$e_{\phi}(R|W^E,p_{\mix})$ is strictly better than
the bound $\tilde{e}_{\psi}(R|W^E,p_{\mix})$
by the other previous paper \cite{Hayashi2}
when the maximum is attained by $t \in [0,1/2)$.

\subsection{General additive case}\label{s6-1-3}
We consider a more general case.
Eve is assumed to have two random variables 
$Z \in {\cal X}$ and $Z' \in {\cal Z}'$.
The first random variable $Z$ is the output of an additive channel 
depending on the second variable $Z'$.
That is, the channel $W_x^E(z,z')$ can be written as
$W_x^E(z,z')=P^{X,Z'}(z-x,z')$, where $P^{X,Z'}$ is a joint distribution.
Hereafter, this channel model is called a general additive channel.
This channel is also called 
a regular channel \cite{DP}.
For this channel model, we obtain
\begin{align}
& e^{\phi(s|W^E,P_{\mix,\cX})}
= \sum_{z,z'} (
\sum_x \frac{1}{|{\cal X}|} W_x^E(z,z')^{\frac{1}{1-s}} )^{1-s}
\nonumber \\
= & \sum_{z,z'} 
(\sum_x \frac{1}{|{\cal X}|} P^{X,Z'}(z-x,z')^{\frac{1}{1-s}} )^{1-s}\nonumber \\
= & 
\frac{1}{|{\cal X}|^{1-s}}
\sum_{z,z'} 
(\sum_x  P^{X,Z'}(-x,z')^{\frac{1}{1-s}} )^{1-s}\nonumber \\
= & 
\frac{|{\cal X}| }{|{\cal X}|^{1-s}}
\sum_{z'}
(\sum_x P^{X,Z'}(x,z')^{\frac{1}{1-s}} )^{1-s}\nonumber \\
= & 
|{\cal X}|^s 
e^{\phi(s|X|Z'|P^{X,Z'})}, \Label{5-14-1}
\end{align}
and
\begin{align}
& 
e^{\psi(s|W^E,p_{\mix})}\nonumber \\
=& \sum_{z,z'} 
(\sum_x \frac{1}{|{\cal X}|} W_x^E(z,z')^{1+s} )
(\sum_x \frac{1}{|{\cal X}|} W_x^E(z,z') )^{-s} \nonumber \\
=& 
|{\cal X}|^{s-1}
\sum_{z,z'} 
(\sum_x P^{X,Z'}(z-x,z')^{1+s} )
(\sum_x P^{X,Z'}(z-x,z') )^{-s} \nonumber \\
=& 
|{\cal X}|^{s-1}
\sum_{z,z'} 
(\sum_x P^{X,Z'}(-x,z')^{1+s} ) P^{Z'}(z')^{-s} \nonumber \\
=& 
|{\cal X}|^{s-1}
|{\cal X}|
\sum_{z'} \sum_x P^{X,Z'}(x,z')^{1+s} P^{Z'}(z')^{-s} \nonumber \\
=& 
|{\cal X}|^{s}
e^{-s H_{1+s}(X|Z'|P^{X,Z'})}.
\end{align}
Then, the equalities
\begin{align}
& e_{\phi}(R|W^E,p_{\mix}) \nonumber \\
=& \max_{0 \le t \le \frac{1}{2}}
t (R-\log |{\cal X}|) - \phi(t|X|Z'|P^{X,Z'}) \nonumber \\
=& e_{\phi}(X|Z'|P^{X,Z'}| \log |{\cal X}|-R) ,\\
& e_{\psi}(R|W^E,p_{\mix}) \nonumber \\
=& \max_{0 \le t \le \frac{1}{2}}
t (R-\log |{\cal X}|) +t{H}_{\frac{1}{1-t}}(X|Z'|P^{X,Z'}) 
\nonumber \\
=& e_{H}(X|Z'|P^{X,Z'}| \log |{\cal X}|-R)\Label{1-6-17}, \\
& \tilde{e}_{\psi}(R|W^E,p_{\mix}) \nonumber \\
=& \max_{0 \le t \le \frac{1}{2}}
\frac{t (R-\log |{\cal X}|) 
+t{H}_{\frac{1}{1-t}}(X|Z'|P^{X,Z'}) }{2-2t}
\nonumber \\
=& \tilde{e}_{H}(X|Z'|P^{X,Z'}| \log |{\cal X}|-R)
\Label{1-6-14}
\end{align}
hold.

Hence, 
the observation in Section \ref{s42s}
can be applied to
the comparison among 
$e_{\phi}(R|W^E,p_{\mix})$,
$e_{\psi}(R|W^E,p_{\mix})$,
and
$\tilde{e}_{\psi}(R|W^E,p_{\mix})$.
Due to Lemma \ref{L1-6-2},
$e_{\phi}(R|W^E,p_{\mix})$
is strictly better than
$e_{\psi}(R|W^E,p_{\mix})$
and
$\tilde{e}_{\psi}(R|W^E,p_{\mix})$
except for the special case mentioned in Lemma \ref{L1-6-2}.

\section{Wire-tap channel with linear coding}\Label{s3-a}
In a practical sense, 
we need to take into account the decoding time.
For this purpose, we often restrict our codes to linear codes.
In the following, we consider the case where 
the sender's space $\cX$ has the structure of a module.
When 
an error correcting code is given as a submodule $C_1\subset \cX$
and
the decoder by the authorized receiver is given as $\{\cD_x\}_{x\in C_1}$,
our code for a wire-tap channel
is given as
$\Phi_{C_1,C_2}= (|C_1/C_2|,
\{Q_{[x]}\}_{[x]\in C_1/C_2},
\{\cD_{[x]}\}_{[x]\in C_1/C_2})$
based on a submodule $C_2$ of $C_1$ as follows.
The encoding $Q_{[x]}$ is given as the uniform distribution on the coset $[x]:=x+C_2$, and 
the decoding $\cD_{[x]}$ is given as the subset 
$\cup_{x'\in x+C_2} \cD_{x'}$.
Next, we consider a submodule 
$C_2(\bX)$ of $C_1$ with cardinality $|C_2(\bX)|=L$
that is labeled by a random variable $\bX$.
Then, the module $C_2(\bX)$ can be regarded as a random variable.
Now, we impose the module $C_2(\bX)$ the following condition.

\begin{condition}\Label{C3}
Any element $x \neq 0 \in C_1$
is included in $C_2(\bX)$ with probability at most 
$\frac{L}{|C_1|}$.
\end{condition}

Then, using (\ref{8-5-1}), we can evaluate 
the performance of the constructed code in the following way.
\begin{thm}
Choose the subcode $C_2(\bX)$ according to Condition \ref{C3}.
We construct the code $\Phi_{C_1,C_2(\bX)}$ by choosing the distribution 
$Q_{[x]}$ to be the uniform distribution on $[x]$ for $[x]\in C_1/C_2(\bX)$.
Then, we obtain
\begin{align}
\rE_{\bX} d_1(\Phi_{C_1,C_2(\bX)}|E) 
\le &
3 \frac{e^{\phi(t|W^E,P_{\mix, C_1})}}{L^t}
\quad 0 \le \forall t \le \frac{1}{2},\Label{4-27-1}
\end{align}
where $P_{\mix,S}$ is the uniform distribution on the subset $S$.
\end{thm}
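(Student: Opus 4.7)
The plan is to recognise the quotient map $f_{\bX}\colon C_1\to C_1/C_2(\bX)$ as a universal$_2$ hash function under Condition~\ref{C3} and then to apply Theorem~\ref{thm1}, in the form of inequality~(\ref{4-25-1}), to the joint distribution of Alice's transmitted letter and Eve's observation. Once the two ingredients are in place, the claim reduces to a short bookkeeping calculation that expresses $\phi(t|P^{A,E})$ in terms of $\phi(t|W^E,P_{\mix,C_1})$.

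First I would verify the universal$_2$ property of the ensemble of quotient maps. For $a_1\neq a_2\in C_1$, the collision $f_{\bX}(a_1)=f_{\bX}(a_2)$ is equivalent to the fixed nonzero element $a_1-a_2\in C_1$ lying in $C_2(\bX)$, which by Condition~\ref{C3} happens with probability at most $L/|C_1|=1/M$, where $M:=|C_1/C_2(\bX)|=|C_1|/L$. All cosets have cardinality $L$, so Condition~\ref{C12} is also satisfied.

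Next I would identify the joint distribution of Alice's input $A$ and Eve's observation $E$. Because the secret message is uniform on $C_1/C_2(\bX)$ and each $Q_{[x]}$ is uniform on its size-$L$ coset, $A$ is uniform on $C_1$, so
\begin{align*}
P^{A,E}(a,e)=\frac{1}{|C_1|}W^E_a(e),\qquad a\in C_1,
\end{align*}
and the secret message equals $f_{\bX}(A)$. Hence $d_1(\Phi_{C_1,C_2(\bX)}|E)=d_1(P^{f_{\bX}(A),E}|E)$, and inequality~(\ref{4-25-1}) yields
\begin{align*}
\rE_{\bX} d_1(\Phi_{C_1,C_2(\bX)}|E)\le 3 M^t e^{\phi(t|P^{A,E})}
\end{align*}
for $0\le t\le \frac{1}{2}$.

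Finally, substituting $P^{A|E}(a|e)=W^E_a(e)/(|C_1|P^E(e))$ into the definition of $\phi(t|P^{A,E})$ and pulling out the factor $|C_1|^{-(1-t)}$ from the inner sum gives
\begin{align*}
\phi(t|P^{A,E})=-t\log|C_1|+\phi(t|W^E,P_{\mix,C_1}),
\end{align*}
so combined with $M^t=|C_1|^t/L^t$ one obtains $M^t e^{\phi(t|P^{A,E})}=L^{-t}e^{\phi(t|W^E,P_{\mix,C_1})}$, which is exactly~(\ref{4-27-1}). The only slightly delicate step is the translation of Condition~\ref{C3} into the universal$_2$ property, and specifically seeing that the two parameters $M$ and $L$ are tied by $ML=|C_1|$; once that is in place, everything else is algebraic manipulation of R\'enyi-type quantities and one invocation of Theorem~\ref{thm1}.
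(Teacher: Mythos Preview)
Your proof is correct and follows essentially the same route as the paper: define the joint distribution $P^{A,E}(a,e)=P_{\mix,C_1}(a)W^E_a(e)$, observe that Condition~\ref{C3} makes the quotient map a universal$_2$ hash, and apply inequality~(\ref{4-25-1}). The paper's proof is much terser, but your explicit verification of the universal$_2$ property and the identity $\phi(t|P^{A,E})=-t\log|C_1|+\phi(t|W^E,P_{\mix,C_1})$ (which is the analogue of~(\ref{12-26-1})) fill in exactly the details it omits.
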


When the channel $W^E$ is additive, i.e., $W^E_x(z)=P^X(z-x)$,
the equation 
$\phi(t|W^E,P_{\mix, C_1+x})= \phi(t|W^E,P_{\mix, C_1})$ 
holds for any $x$.
Thus, the concavity of $e^{\phi(t|W^E,p)}$ (Lemma \ref{l1})
implies that 
\begin{align}
\phi(t|W^E,P_{\mix, C_1})
\le \phi(t|W^E,P_{\mix, {\cal X}}).\Label{2-13-1}
\end{align}
Thus, combining (\ref{4-27-1}), (\ref{2-13-1}), and (\ref{12-26-2}), 
we obtain
\begin{align}
\rE_{\bX} d_1(\Phi_{C_1,C_2(\bX)}|E) 
\le &
3 \frac{|{\cal X}|^t e^{-t{H}_{\frac{1}{1-t}}(X|P)}}{L^t}
\Label{4-27-2}
\end{align}
for 
$0 < t <\frac{1}{2}$.
That is,
when $L=e^R$, taking the minimum concerning $0 < t <\frac{1}{2}$,
we obtain
\begin{align}
\rE_{\bX} d_1(\Phi_{C_1,C_2(\bX)}|E) 
\le & 3 e^{-e_{H}(X|P^X|\log |{\cal X}|-R)}.
\end{align}
When the additive noise 
obeys the $n$-fold i.i.d. of $P$
on ${\cal X}^n$
and $L=e^{nR}$,
we obtain
\begin{align}
\rE_{\bX} d_1(\Phi_{C_1,C_2(\bX)}|E) 
\le & 3 e^{-n e_{H}(X|P^X|\log |{\cal X}|-R)}.
\Label{1-6-11}
\end{align}

Similarly, when the channel $W^E$ is general additive, i.e., 
$W^E_x(z,z')=P^{X,Z'}(z-x,z')$,
combining (\ref{4-27-1}), (\ref{2-13-1}), and (\ref{5-14-1}), 
we obtain
\begin{align}
\rE_{\bX} d_1(\Phi_{C_1,C_2(\bX)}|E) 
\le &
3 \frac{|{\cal X}|^t e^{\phi(t|X|Z'|P^{X,Z'})}}
{L^t} 
\Label{4-27-3}
\end{align}
for $0 < t <\frac{1}{2}$.
That is,
when $L=e^R$, taking the minimum concerning $0 < t <\frac{1}{2}$,
we obtain
\begin{align}
\rE_{\bX} d_1(\Phi_{C_1,C_2(\bX)}|E) 
\le & 3 e^{-e_{\phi}(X|Z'|P^{X,Z'}|\log |{\cal X}|-R)}.
\end{align}
In the $n$-fold i.i.d. case,
when $L =e^{nR}$,
we obtain
\begin{align}
\rE_{\bX} d_1(\Phi_{C_1,C_2(\bX)}|E) 
\le & 3 e^{-n e_{\phi}(X|Z'|P^{X,Z'}|\log |{\cal X}|-R)}.
\Label{1-6-16}
\end{align}

When
$\cX$ is an $n$-dimensional vector space $\bF_q^n$
over the finite field $\bF_q$,
the bound can be attained by the combination of
linear code and the concatenation of a Toeplitz matrix and the identity
$(\bX,I)$ of the size $m \times (m-k)$ \cite{Hayashi2}.
Hence, if the error correcting code $C_1$ can be realizable, 
the whole process in the above code can be realizable.

\begin{rem}\Label{rem1}
In the additive case,
due to (\ref{1-5-7}), 
the exponent of 
the upper bound given in (\ref{1-6-11}) is the same as 
that given by the previous paper \cite{Hayashi}.
However, the code given in \cite{Hayashi}
is constructed by completely random coding.
However, the code given in this section 
is based on the ordinary linear code.
For security, it requires only the universal hash condition.
So, our construction requires smaller complexity than that given in 
 \cite{Hayashi}.
In the general additive case,
our exponents (\ref{1-6-16})
is strictly better than 
that given in \cite{Hayashi},
which is calculated in (\ref{1-6-17}).

Next, we consider the relation with the other previous paper \cite{Hayashi2}
in the general additive case.
The protocol given in \cite{Hayashi2} is 
is quite similar to ours.
However, as is shown in Lemma \ref{L1-6-2},
except for the very special case,
our exponent (\ref{1-6-16}) is strictly better than 
that given in \cite{Hayashi2},
which is calculated in (\ref{1-6-14}).
Remember that 
the exponent given in 
 \cite{Hayashi2} is 
$\tilde{e}_{\psi}(R|W^E,p_{\mix}) $,
which is mentioned around (\ref{1-6-18}).
\end{rem}

\section{Secret key generation with public communication}\Label{s6}
Furthermore, the above result can be applied to secret key generation (distillation) with one-way public communication,
in which,
Alice, Bob, and Eve are assumed to have initial random variables 
$A\in {\cal A}$, $B\in {\cal B}$, and $E\in {\cal E}$, respectively.
The task for Alice and Bob is to share a common random variable almost independent of Eve's random variable $E$ by using a public communication.
For this purpose,
we assume that 
Alice and Bob can perform local data processing in the both sides
and Alice can send messages to Bob via public channel.
That is, only one-way communication is allowed.
We call such a combination of these operations a code and denote it by $\Phi$.

The quality is evaluated by three quantities:
the size of the final common random variable,
the probability that their final variables coincide, and
Eve's distinguishability $d_1(\Phi|E)$ of the final joint distribution between Alice and Eve.

In order to construct a protocol for this task,
we assume that the set ${\cal A}$ has a module structure 
(any finite set can be regarded as a cyclic group).
Then, the objective of secret key distillation can be realized by applying the code of a wire-tap channel
as follows.
First, 
Alice generates another uniform random variable $X$ and sends
the random variable $X':= X+A$.
Then, 
the distribution of the random variables $B,X'$ ($E,X'$) accessible to Bob (Eve)
can be regarded as the output distribution of the channel $x \mapsto W^{B}_x$
($x \mapsto W^{E}_x$).
The channels $W^{B}$ and $W^{E}$ are given as follows.
\begin{align}
W^{B}_x(x',b)= P^{A,B}(x'-x,b), ~
W^{E}_x(x',e)= P^{A,E}(x'-x,e),\Label{4-26-1}
\end{align}
where $P^{AB}(a,b)$ 
($P^{AE}(a,e)$) 
is the joint probability between
Alice's initial random variable $A$ and
Bob's (Eve's) initial random variable $B$ ($E$).
Hence, the channel $W^E$ is general additive.

Applying Theorem \ref{3-6} to the uniform distribution $P_{\mix}^A$,
for any numbers $M$ and $L$,
due to (\ref{5-14-1}),
there exists a code $\Phi$ such that $|\Phi|=M$ and
\footnote{
The previous paper \cite[Section VI]{Hayashi} derived upper bounds different from (\ref{1-6-19})
and (\ref{1-6-20}) while 
it treat the same protocol.
The previous paper \cite[Section VI]{Hayashi}
erroneously calculated 
$e^{\phi(-s|W^B,P_{\mix,\cA})}$
to 
$|\cA|^{-s} e^{-s{H}_{\frac{1}{1+s}}(A|B|P^{A,B})}$.
However, the correct calculation is 
$|\cA|^{-s} e^{\phi(-s| A|B|P^{A,B})}$
as is shown in (\ref{5-14-1}).
}
\begin{align}
\epsilon_B(\Phi)&\le 3 \min_{0\le s\le1} (ML)^s |\cA|^{-s} 
e^{\phi(-s| A|B|P^{A,B})} 
\Label{1-6-19}\\
d_1(\Phi|E)&\le 9
\min_{0\le t\le \frac{1}{2}} \frac{|{\cal A}|^t 
e^{\phi(t| A|E|P^{A,E})} }{L^t }.
\end{align}

In particular, 
when 
the joint distribution between 
$A$ and $B$($E$) 
is the $n$-fold independent and identical distribution (i.i.d.)
of $P^{A,B}$ ($P^{A,E}$), respectively,
the relation
$
\phi(t| A^n|E^n|(P^{A,E})^n)
=
n\phi(t| A|E|P^{A,E})$
hold.
Thus, there exists a code $\Phi_n$ for any
integers $L_n,M_n$,
and any probability distribution $p$ on $\cX$
such that $|\Phi_n| =M_n $ and
\begin{align}
\epsilon_B(\Phi) & \le 3
\min_{0\le s\le 1}
(M_n L_n)^{s}
|{\cal A}|^{-ns} 
e^{n\phi(-s| A|B|P^{A,B})} 
\Label{1-6-20}\\
d_1(\Phi_n|E) &\le 9
\min_{0\le t\le \frac{1}{2}} \frac{|{\cal A}|^{nt} 
e^{n\phi(t| A|E|P^{A,E})}}{L_n^t }.
\Label{2-13-2}
\end{align}

Finally, 
we mention the relation with the previous paper \cite{Hayashi}.
Since the above discussion is an application of 
section \ref{s3-a},
the same comparison as Remark \ref{rem1} is valid.
Hence, our evaluation (\ref{2-13-2}) is strictly better than 
that given in \cite{Hayashi}
except for the special case.

\section{Discussion}
We have derived a tight evaluation of the exponent for 
the average of the $L_1$ norm distance
between the generated random number and the uniform random
number when universal$_2$ hash functions are applied
and the key generation rate is less than the critical rate $R_1$.
Using this evaluation, 
we have obtained 
an upper bound for Eve's distinguishability in secret key generation from a common random number without communication
when universal$_2$ hash functions are applied.
Since our bound is based on the R\'{e}nyi entropy of order $1+s$ for $s \in [0,1]$,
it can be regarded as an extension of Bennett et al \cite{BBCM}'s result with the R\'{e}nyi entropy of order 2.

Applying this bound to the wire-tap channel, we obtain an 
upper bound for Eve's distinguishability,
which yields an exponential upper bound.
This exponent improves on the existing exponent \cite{Hayashi}.
Further, when the error correction code is given by a linear code
and when the channel is additive or general additive,
the privacy amplification is given by a concatenation of a Toeplitz matrix and the identity matrix.
This method can be applied to secret key distillation with public communication.

\section*{Acknowledgments}
The author is grateful to Professors Ryutaroh Matsumoto 
and Takeshi Koshiba for a helpful comments.
He is also grateful to the
referee of the previous version for helpful comments.
This research was partially supported by 
a MEXT Grant-in-Aid for Young Scientists (A) No. 20686026
and a MEXT Grant-in-Aid for Scientific Research (A) No. 23246071.
The Centre for Quantum Technologies is funded by the
Singapore Ministry of Education and the National Research Foundation
as part of the Research Centres of Excellence programme. 

\appendices

\section{Proof of Theorem \ref{thm2}}\Label{as2}
First, for a fixed element $a \in \Omega$, we introduce the condition
for a hash function $f_{\bf X}$:
\begin{condition}[{Condition $[ a, \Omega ]$}]
\begin{align*}
f_{\bf X}(a)\neq 
f_{\bf X}(a')\hbox{ for } \forall a' (\neq a) \in \Omega.
\end{align*}
\end{condition}
Let $P[a,\Omega]$ be the probability that Condition $[a,\Omega]$ holds.
Due to 
the strongly universal$_2$ condition,
it is evaluated as
\begin{align*}
1- P[a,\Omega] =&
\Pr \cup_{a' (\neq a) \in \Omega}\{ f_{\bf X}(a)= f_{\bf X}(a') \} \\
\le & \sum_{a' (\neq a) \in \Omega} \Pr \{ f_{\bf X}(a)= f_{\bf X}(a') \} \\
= &\sum_{a' (\neq a) \in \Omega} \frac{1}{M}
=\frac{|\Omega|}{M},
\end{align*}
which implies that $P[a,\Omega] \ge 1-\frac{|\Omega|}{M}$.
When we denote the expectation concerning the hash funcations under Condition $[a,\Omega]$
by $\rE_{{\bf X}| [a,\Omega] }$,
the strongly universal$_2$ condition yields that
\begin{align*}
& 
\rE_{{\bf X}| [a,\Omega] }
\Bigl(P^A(a)+
\sum_{a' (\neq a )\in f_{\bf X}^{-1}(a)} P^A(a')-\frac{1}{M}\Bigr) \\
= &
P^A(a)+
\rE_{{\bf X}| [a,\Omega] } \sum_{a' (\neq a )\in f_{\bf X}^{-1}(a)}
P^A(a')-\frac{1}{M} \\
= &
P^A(a)+\frac{1}{M} \sum_{a' (\neq a )\in \cA\setminus \Omega} P^A(a')-\frac{1}{M}\\
= &
P^A(a)+\frac{1}{M} (1-P^A(\Omega))-\frac{1}{M} \\
= &
P^A(a) -\frac{1}{M} P^A(\Omega).
\end{align*}
When Conditions $[a_1,\Omega],[a_2,\Omega], \ldots, [a_k,\Omega]$ 
hold for $a_1,a_2, \ldots, a_k \in \Omega$,
$f_{\bf X}(a_1), f_{\bf X}(a_2), \ldots, f_{\bf X}(a_k)$ are different.
Then,
\begin{align*}
d_1(P^{f_{\bf X}(A)}) 
\ge 
\sum_{j=1}^k \Bigl|P^A(a_j)+\sum_{a' (\neq a_j )\in f_{\bf X}^{-1}(a_j)} P^A(a')-\frac{1}{M}\Bigr| .
\end{align*}
Now,
we define the random variable $Y(a)$ to be $1$ when 
Condition $[a,\Omega]$ holds.
We define it to be 0 otherwise.
Then,
\begin{align*}
& d_1(P^{f_{\bf X}(A)}) \\
\ge &
\sum_{a \in \Omega} Y(a) \Bigl|P^A(a)+\sum_{a' (\neq a )\in f_{\bf X}^{-1}(a)} P^A(a')-\frac{1}{M}\Bigr| .
\end{align*}
Therefore, taking the expectation, we can evaluate $\rE_{{\bf X}}d_1(P^{f_{\bf X}(A)})$ as follows.
\begin{align*}
& \rE_{{\bf X}}
d_1(P^{f_{\bf X}(A)}) \\
\ge &
\sum_{a\in \Omega}
P[a,\Omega] \rE_{{\bf X}| [a,\Omega] }
\Bigl|P^A(a)+\sum_{a' (\neq a )\in f_{\bf X}^{-1}(a)} P^A(a')-\frac{1}{M}\Bigr| \\
\ge &
\sum_{a\in \Omega}
P[a,\Omega] \rE_{{\bf X}| [a,\Omega] }
\Bigl(P^A(a)+\sum_{a' (\neq a )\in f_{\bf X}^{-1}(a)} P^A(a')-\frac{1}{M}\Bigr) \\
= &
\sum_{a\in \Omega}
(1-\frac{|\Omega|}{M})
\Bigl(P^A(a) -\frac{1}{M} P^A(\Omega)\Bigr) \\
= &
(1-\frac{|\Omega|}{M})
(P^A(\Omega) -\frac{|\Omega|}{M} P^A(\Omega))
=(1-\frac{|\Omega|}{M})^2 P^A(\Omega).
\end{align*}

\section{Proof of Lemma \ref{l9-9-1}}\Label{al9-9-1}
We choose $s(R)$ such that
$
\frac{d (s{H}_{1+s}(A|P))}{ds}|_{s=s(R)} 
=
H(P_{1+s(R)})+D(P_{1+s(R)}\|P) = R$,
where $P_{1+s}(a):= \frac{P(a)^{1+s}}{\sum_{a'}P(a')^{1+s}}$.
When $Q$ satisfies $H(Q)+D(Q\|P) = R$,
\begin{align*}
& D(Q\|P)-D(P_{1+s}\|P) \\
= &\sum_a Q(a) (\log Q(a)-\log P(a))\\
&-\sum_a \frac{P(a)^{1+s}}{\sum_{a'}P(a')^{1+s}}(\log \frac{P(a)^{1+s}}{\sum_{a'}P(a')^{1+s}}  -\log P(a)) \\
= & \sum_a Q(a) (\log Q(a)-\log \frac{P(a)^{1+s}}{\sum_{a'}P(a')^{1+s}})\\
&+\sum_a 
(Q(a)-\frac{P(a)^{1+s}}{\sum_{a'}P(a')^{1+s}}) \\
&\quad \cdot(\log \frac{P(a)^{1+s}}{\sum_{a'}P(a')^{1+s}}  -\log P(a)) \\
= & D(Q\|P_{1+s})
+s \sum_a 
(Q(a)-\frac{P(a)^{1+s}}{\sum_{a'}P(a')^{1+s}})\log P(a) \\
= & D(Q\|P_{1+s}) \\
&+s (H(P_{1+s})+D(P_{1+s}\|P)-H(Q)+D(Q\|P)) \\
= & D(Q\|P_{1+s})\ge 0.
\end{align*}
Hence, 
\begin{align*}
&\min_{Q:H(Q)+D(Q\|P) = R}
H(Q)+2D(Q\|P) -R \\
=&
\min_{Q:H(Q)+D(Q\|P) = R}
D(Q\|P) 
= 
 D(P_{1+s(R)}\|P)\\
=& s {H}_{1+s}(A|P)- s(R) 
\frac{d (s{H}_{1+s}(A|P))}{ds}|_{s=s(R)} 
\\
=& s {H}_{1+s}(A|P)-s(R) R
= \max_{0\le s }  s {H}_{1+s}(A|P) -s R .
\end{align*}
The last equation follows from the concavity of $s {H}_{1+s}(A|P)$
concerning $s$.

Assume that $
\frac{d (s{H}_{1+s}(A|P))}{ds}|_{s=1} 
\le R $.
Then, $s(R) \le 1$.
When $R' \ge R$,
\begin{align*}
&\min_{Q:H(Q)+D(Q\|P) = R'} H(Q)+2D(Q\|P) -R \\
=&\max_{0\le s }  s {H}_{1+s}(A|P) -s R +R'-R \\
\ge &
 s {H}_{1+s(R)}(A|P) -s(R) R' +R'-R \\
\ge &
 s {H}_{1+s(R)}(A|P) -s(R) R \\
=&
\max_{0\le s }  s {H}_{1+s}(A|P) -s R\\
=&
\max_{0\le s \le 1}  s {H}_{1+s}(A|P) -s R,
\end{align*}
which implies (\ref{9-9-4}).

Assume that $\frac{d (s{H}_{1+s}(A|P))}{ds}|_{s=1} 
 > R $.
When $R' \ge R$,
\begin{align*}
&\min_{Q:H(Q)+D(Q\|P) = R'} H(Q)+2D(Q\|P) -R \\
=&\max_{0\le s }  s{H}_{1+s}(A|P) -s R +R'-R \\
\ge &
 1 {H}_{1+1}(A|P) - R' +R'-R 
= 
 {H}_{2}(A|P) -R .
\end{align*}
Further, when $R'=\frac{d (s{H}_{1+s}(A|P))}{ds}|_{s=1} $,
\begin{align*}
&\min_{Q:H(Q)+D(Q\|P) = R'} H(Q)+2D(Q\|P) -R \\
=&
{H}_{1+1}(A|P) - R' +R'-R 
= 
 {H}_{2}(A|P) -R ,
\end{align*}
which implies (\ref{9-9-5}).

Further, the concavity of $s \mapsto s{H}_{1+s}(A|P)$ 
and the condition $\frac{d (s{H}_{1+s}(A|P))}{ds}|_{s=1}  > R $
imply that $\max_{0\le s \le 1}  s{H}_{1+s}(A|P) -s R = {H}_{2}(A|P) -R $.
Thus, we obtain (\ref{2-2-8}).

\section{Proof of (\ref{8-03-a})}\Label{8-03-e}
First, we consider the 
the minimum $\min_{\tilde{P}^A:H_{\min}(A|\tilde{P}^A) \ge R'} d_1(P^A,\tilde{P}^A) $,
where $\tilde{P}^A$ is chosen to be a subdistribution satisfying $H_{\min}(A|\tilde{P}^A) \ge R'$.
\begin{align}
& \min_{\tilde{P}^A:H_{\min}(A|\tilde{P}^A) \ge R'} d_1(P^A,\tilde{P}^A) \nonumber \\
=&
\min_{\tilde{P}^A:H_{\min}(A|\tilde{P}^A) \ge R'} 
\sum_{a\in {\cal A}}
|P^A(a)-\tilde{P}^A(a)| \nonumber \\
=&
\sum_{a\in {\cal A}:P^A(a)> e^{-R'} }
(P^A(a)-e^{-R}) \nonumber \\
\ge &
\frac{1}{2}
\sum_{a\in {\cal A}:P^A(a)> 2 e^{-R'} } 
P^A(a) 
= 
\frac{1}{2}
P^A \{P^A(a)> 2 e^{-R'} \}.\nonumber 
\end{align}
Using this relation,
we have
\begin{align}
& \min_{\epsilon}
(M^{\frac{1}{2}} e^{-\frac{H_{\min,\epsilon}(A|(P^{A})}{2}}
+2\epsilon) \nonumber \\
=&
\min_{R'}
\min_{\tilde{P}^A:H_{\min}(A|\tilde{P}^A) \ge R'} 
(M^{\frac{1}{2}} e^{-R'/2}
+2 d_1(P^A,\tilde{P}^A)) \nonumber \\
=&
\min_{R'}
(M^{\frac{1}{2}} e^{-R'/2}
+P^A \{P^A(a)> 2 e^{-R'} \}) \Label{8-03-b}.
\end{align}
Using (\ref{9-9-10}), we obtain
\begin{align}
& \lim_{n \to \infty}
\frac{-1}{n}\log \min_{R'}
(e^{\frac{nR-nR'}{2}} 
+(P^A)^2 \{(P^A)^n(a)> 2 e^{-nR'} \}) \nonumber \\
= &
\max_{R'}
\lim_{n \to \infty}
\frac{-1}{n}\log 
(e^{\frac{nR-nR'}{2}} 
+(P^A)^2 \{(P^A)^n(a)> 2 e^{-nR'} \})\nonumber \\
=&
\max_{R'}
\min( R'-R, \max_{0 \le s } s {H}_{1+s}(A|P^A)- s R')\nonumber \\
=&
\max_{R'}
\max_{0 \le s } 
\min( R'-R, s {H}_{1+s}(A|P^A)- s R')\nonumber \\
=&
\max_{0 \le s } 
\max_{R'}
\min( R'-R, s {H}_{1+s}(A|P^A)- s R')\Label{8-03-c}.
\end{align}
Since $s {H}_{1+s}(A|P^A)- s R'$ is monotonically decreasing with $R'$
and $R'-R$ is monotonically increasing with $R'$,
the maximum
$\max_{R'}
\min( R'-R, s {H}_{1+s}(A|P^A)- s R')$ 
is realized when $R'-R= s {H}_{1+s}(A|P^A)- s R'$,
which implies that $R'=\frac{R+2 s H_{1+s}(A|P^A)}{1+2 s}$.
Hence, 
\begin{align}
&\max_{0 \le s } 
\max_{R'}
\min( R'-R, s {H}_{1+s}(A|P^A)- s R')\nonumber \\
=&
\max_{0 \le s } \frac{s H_{1+s}(A|P^A)-s R}{1+2s}.\Label{8-03-d}
\end{align}
Therefore, combining (\ref{8-03-b}), (\ref{8-03-c}), and (\ref{8-03-d}),
we obtain (\ref{8-03-a})
because $\max_{0\le s } 
\frac{s{H}_{1+s}(A|P^{A}) -s R}{1+2s} 
=
\max_{0\le t \le 1 } 
\frac{t{H}_{1/(1-t)}(A|P^{A}) -t R}{1+t}$.

\end{document}